\newtheorem{theorem}{Theorem}
\newtheorem{lemma}[theorem]{Lemma}
\newtheorem{proposition}[theorem]{Proposition}
\newtheorem{definition}{Definition}
\newcommand{\FIXED}{\mathrm{Fixed}}
\newcommand{\M}{\mathcal{M}}
\newcommand{\PWin}{P}
\newcommand{\R}{\mathbb{R}}
\newcommand{\RV}{\mathsf{RV}}
\newcommand{\STAT}{\mathrm{Stat}}
\newcommand{\SECPRICE}{\mathrm{SecPrice}}
\newcommand{\SW}{\mathsf{SW}}
\newcommand{\vect}{\mathbf}
\newcommand{\BibTeX}{B\kern-.05em{\sc i\kern-.025em b}\kern-.08em\TeX}
\begin{document}


\begin{frontmatter}


\paperid{123} 


\title{Meta-mechanisms for Combinatorial Auctions over Social Networks}




\author[A]{\fnms{Yuan}~\snm{Fang}}
\author[A]{\fnms{Mengxiao}~\snm{Zhang}\thanks{Corresponding author. Email: mengxiao.zhang@uestc.edu.cn}}
\author[B]{\fnms{Jiamou}~\snm{Liu}} 
\author[A]{\fnms{Bakh}~\snm{Khoussainov}}

\address[A]{School of Computer Science and Engineering, University of Electronic Science and Technology of China, China}
\address[B]{School of Computer Science, The University of Auckland, New Zealand}


\begin{abstract}
Recently there has been a large amount of research designing mechanisms for auction scenarios where the bidders are connected in a social network. Different from the existing studies in this field that focus on specific auction scenarios e.g. single-unit auction and multi-unit auction, this paper considers the following question: is it possible to design a scheme that, given a classical auction scenario and a mechanism $\tilde{\M}$ suited for it, produces a mechanism in the network setting that preserves the key properties of $\tilde{\M}$? To answer this question, we design meta-mechanisms that provide a uniform way of transforming mechanisms from classical models to mechanisms over networks and prove that the desirable properties are preserved by our meta-mechanisms. Our meta-mechanisms provide solutions to combinatorial auction scenarios in the network setting:  (1) combinatorial auction with single-minded buyers and (2) combinatorial auction with general monotone valuation. To the best of our knowledge, this is the first work that designs combinatorial auctions over a social network. 
\end{abstract}

\end{frontmatter}


\section{Introduction}

Recently there has been a large amount of research extending various mechanisms in the classical auction scenarios, where the bidders have no connection with each other, to the settings where the bidders are connected in a social network. The motivation of this research comes from scenarios when the cost of traditional advertising for auction is large or when the seller aims to create a more organic and trusted form of marketing through word-of-mouth. These scenarios include, e.g., online marketplaces such as eBay or Etsy that encourage sellers to share their listings with their social circles \cite{tadelis2016reputation},  crowdfunding campaigns such as Kickstarter or Indiegogo  which rely on social networks to spread the word and attract backers \cite{silva2020success}, viral marketing campaigns, group buying, influencer marketing, and any other business models that benefit from network effects. 

One theme in this research is to discover whether key properties of mechanisms in classical auctions can be preserved in auctions where bidders are connected in a social network. Examples of classical auction scenarios include single-item auctions, multi-unit auctions, and various combinatorial auctions. Most of these auctions possess mechanisms that satisfy important properties of mechanism design, e.g., incentive compatibility (IC), individual rationality (IR) and non-deficiency (ND). Extending these properties of mechanisms to the setting of social networks turned out to be fruitful and  yet a challenging research direction \cite{guo2021emerging}. Most of these extensions are non-uniform as they are scenario dependent. Moreover, each scenario requires development of new ideas and techniques suited for the scenario, where techniques used for one scenario are not necessarily applicable to another scenario. E.g., \cite{li2017mechanism} focuses on single-unit auctions while \cite{fang2023multi} on multi-unit auctions. 

In this paper, we pose the following question: Is it possible to design a scheme that, given a classical auction scenario and a mechanism $\tilde{\M}$ suited for it, produces a mechanism in the social network setting that preserves the key properties of $\tilde{\M}$?  We call such schemes {\bf meta-mechanisms} as they are applied to mechanisms in classical settings and produce mechanisms where the bidders are connected via social networks. Thus, different from the previous studies that focus on specific auction scenarios, this paper aims to design meta-mechanisms that provide a uniform way of transforming mechanisms from classical models to mechanisms over networks. 

Such a uniform method serves two purposes: First, it will consolidate various existing auction approaches, enabling a more cohesive understanding across different scenarios. Second and more significantly, the meta-mechanism may facilitate the creation of new mechanisms for previously unexplored auction scenarios in networks, notably in combinatorial auctions. Such auctions, which involve bidders interested in item bundles with complex valuations, present unique challenges that are amplified by social network dynamics. 


\paragraph{Contributions.} We introduce MetaMSN, a novel meta-mechanism designed to transform classical auction mechanisms into formats suitable for auctions over social networks. A key feature of MetaMSN is its ability to inherit essential properties such as IC, IR, and ND from the classical mechanisms under the {\em non-sensitivity assumption}. Building upon MetaMSN, we initiate the study of combinatorial auctions over social networks. (1) MetaMSN gives rise to a combinatorial auction mechanism with {\em single-minded buyers} in a network setting, which also upholds the IC, IR, and ND properties. (2) We then modify MetaMSN to create MetaMSN-m, which maintains the IC, IR, and ND properties without any extra assumption. MetaMSN-m is then used to produce a combinatorial auction with general monotone valuation over social networks. Lastly, we provide a empirical comparative analysis of the social welfare achieved by our mechanisms for combinatorial auctions with single-minded, sub-modular and sub-additive buyer. These experiments underscore the practical effectiveness and potential of our proposed mechanisms in diverse auction contexts within social networks.

\paragraph{Related work. } There has been a growing interest in designing auctions over social networks. The IDM mechanism \cite{li2017mechanism} that focuses on single-unit auctions is IC, IR and ND. The key idea of IDM is to incentivise information propagation by rewarding those who hold critical positions in the social network. Mechanisms CMD \cite{li2019diffusion}, FDM \cite{zhang2020incentivize}, and NRM \cite{zhang2020redistribution} are also designed for single-unit auctions, with improved revenue, fairness, and social welfare, respectively.
More recently, several research attempt to extend to multi-unit auction scenarios. 
For this, a generalised IDM \cite{zhao2018selling} and DNA-MU \cite{kawasaki2020strategy} are proposed. However, both fail to preserve the IC property. 
Later, LDM-Tree \cite{liu2023diffusion} and MUDAN \cite{fang2023multi} are proposed using new ideas for incentivising information propagation. Both mechanisms apply the idea of confining the competition within a small group of buyers, thereby giving buyers incentives to  propagate the information. 
It turns out that our meta-mechanisms unify some of these mechanisms; See Sec.~\ref{sec:MetaMSN}. 
Several studies extend to auction scenarios with other assumptions, e.g., auctions with budgeted buyers \cite{xiao2022multi} and auctions with intermediaries \cite{li2018customer,li2020information,li2019diffusion,li2024diffusion}. 

Designing combinatorial auctions amounts to one of the key topics in mechanism design. VCG mechanism is the most famous combinatorial auction that ensures IC, IR and social welfare maximisation. However, VCG mechanism requires to find the winner(s) who maximise the social welfare, which is NP-hard \cite{dobzinski2005approximation,assadi2019improved}. 
Many mechanisms are proposed to circumvent the computational complexity issue. \cite{dobzinski2006truthful} proposes the first IC and computationally efficient mechanism for general monotone valuations. 
Subsequent studies propose mechanisms with improved approximation ratios w.r.t. social welfare for difference cases of valuation functions.  \cite{dobzinski2005approximation,dobzinski2007two,assadi2021improved} investigate the case with sub-additive valuation functions, while \cite{dobzinski2006truthful,dobzinski2007two,krysta2012online,dobzinski2016breaking,assadi2020improved} investigate the case with sub-modular valuation functions.
\cite{lehmann2002truth,mu2008truthful,archer2004approximate} focus on the case with low communication cost, i.e., combinatorial auction with single-minded buyers.
With our meta-mechanisms, these mechanisms would be conveniently extended to auctions over social networks. 
To the best of our knowledge, our work is the first on combinatorial auctions over social networks.

\section{Preliminaries}

In this section, we introduce the formal model of {\em auctions over social networks}. The classical auction model (i.e., without a social network) can be regarded as a special case. The definitions below are largely taken from earlier work in the field \cite{li2017mechanism,guo2021emerging}. 

An auction consists of a seller, denoted by $s$, and $n$ buyers $B=\{1,2,\ldots,n\}$. Seller $s$ has a set of $m$ indivisible items for sale. 
In a combinatorial auction, buyers are interested in purchasing {\em bundles} of items, which are represented by indicator vector $x\in \{0,1\}^m$. The empty bundle is $\emptyset\coloneqq \vect{0}$ and the bundle of all items is $x^0\coloneqq \vect{1}$.  Let $X$ denote the set of all possible bundles. A buyer $i\in B$ has a valuation to any bundle, as defined by the {\em valuation function} $v_i\colon X\to \R^+$. Namely, $v_i(x)$ denotes the amount that $i$ is willing to pay for the bundle $x\in X$. Following normal convention, we require that the valuation function $v_i$ is {\em normalised}, i.e., $v_i(\vect{0}) = 0$, and {\em monotone}, i.e., $v_i(x) \leq v_i(y)$, for all $x\subseteq y \subseteq X$.


We assume that the seller and the buyers form a social network, represented by a {\em directed graph} $G=(N,E)$, where the vertex set is $N \coloneqq \{s\}\cup B$, and edge set $E\subseteq N^2\backslash \{(i,i)\mid i\in N\}$ represents the social connections between the agents. 
Any edge $(i,j)$ denotes a (uni-directional) channel of information propagation from buyer $i$ to buyer $j$. 
The {\em neighbour} of each agent $i\in N$ is $r_i \coloneqq \{j\in N \mid (i,j)\in E\}$. In particular, $r_s$ is the neighbour set of seller $s$. 

We assume that the auction information is not publicly known by all vertices of the social network. At the very beginning, only the seller $s$ and her neighbours $r_s$ know the auction. 
The seller would request her neighbours to invite their neighbours to the auction, in the hope of attracting more buyers. Any buyer who joins the auction would decide whether to pass the information onto their neighbours, and so on. The auction information will then propagate along edges of the social network until no new buyer joins the auction. 

More specifically, we use $\theta_i\coloneqq (v_i,r_i)$ to denote the {\em (true) profile} of each buyer $i\in B$. The profile $\theta_i$ is initially hidden information known only by $i$. Upon joining the auction, the buyer $i$ would report a profile $\theta'_i=(v_i',r_i')$ to the seller $s$, where the reported valuation $v_i'\colon X\to \R^+$ and the reported neighbourhood $r_i'\subseteq N$. The reported profile $\theta'_i$ is not necessarily the same as the true profile $\theta'$, but rather, the buyer $i$ could strategically decide what information is passed to $s$ depending on how much benefit doing so would bring.  Following standard convention \cite{li2017mechanism}, we assume that the reported neighbour set $r_i'\subseteq r_i$ for any $i\in B$. 

Inductively, we say that a buyer $i\in B$ {\em joins} the auction if either $i\in r_s$, or $i\in r_j$ where $j$ is a buyer who joins the auction. Note that under this setup, a buyer $i$ can join the auction only when there is a {\em path} from $s$ to $i$. Clearly, a buyer $i$ who does not join the auction  does not report any profile to the seller, yet in this case we write  $\theta'_i=(0^X, \emptyset)$, treating the reported valuation all as 0.
The {\em global profile} is the reported profiles $\theta'\coloneqq \left(\theta'_j\right)_{i\in B}$ of all buyers. In particular, the {\em true global profile} is $\theta\coloneqq \left(\theta_i\right)_{i\in B}$.

Given a reported global profile, a mechanism determines how the seller $s$ allocates items among buyers and how much the buyers would pay to $s$. Next we give the formal definition of a mechanism and its desirable properties. Let $\Theta$ denote the set of all possible reported profiles. 

\begin{definition}
A {\em mechanism} $\M$ consists of two functions $(\pi,p)$, where $\pi\colon \Theta^n \to X^n$ is the {\em allocation function} and $p\colon \Theta^n \to \R^{n}$ is the {\em payment function}.
For a reported global valuation $\theta'$, the {\em allocation result} $\pi(\theta')$ is written as $\pi(\theta')\coloneqq (x_1,\ldots,x_n)$ and the {\em payment result} $p(\theta')$ as $(p_1,\ldots,p_n)$.    
\end{definition}
Intuitively, for a buyer $i\in B$, the results $\pi_i(\theta')=x_i$ and $p_i(\theta')$ show she wins the bundle $x_i$ by paying $p_i$.  We require that any allocation result is {\em feasible}, i.e., $\sum_{i\in B} x_i \leq \vect{1}$, i.e., each item is allocated at most once. 
In this paper, we will occasionally refer to a mechanism defined above as a {\em mechanism over social networks}, to distinguish it from mechanisms in the classical model, which we will discuss below.

Let $\M=(\pi,p)$ be a mechanism. 
Given the global profile $\theta'$, the {\em utility} of buyer $i$ is defined as $u_i(\theta')\coloneqq v_i(x_i)-p_i$. Intuitively, a rational buyer $i$ would report a profile $\theta'_i$ that leads to a high utility $u_i(\theta')$. 
More precisely, let $\theta_{-i}\coloneqq (\theta_1,\ldots,\theta_{i-1},\theta_{i+1},\ldots,\theta_n)$ denote the true profiles of all buyers but $i$. An ideal mechanism $\M$ should satisfy the following properties: 
\begin{enumerate}[leftmargin=*]
\item {\em Incentive compatibility (IC)}: $\M$ is IC if for any buyer, reporting truthfully is a dominant strategy: for all $i\in B$, $\theta_i,\theta_i'\in \Theta$ and $\theta_{-i}\in \Theta^{n-1}$,  $u_i(\theta_i, \theta_{-i})\geq u_i(\theta_i',\theta_{-i})$. 
\item {\em Individual rationality (IR)}: $\M$ is IR if any buyer by reporting truthfully receives non-negative utility: for all $i\in B$, $\theta_i\in \Theta$ and $\theta_{-i}\in \Theta^{n-1}$, $u_i(\theta_i,\theta_{-i})\geq 0$. 

\item {\em Non-deficiency (ND)}: The {\em revenue} $\RV(\theta')$ is the sum of the payment of all buyers, i.e., $\sum_{i=1}^n  p_{i}$. $\M$ is  ND if for any global profile $\theta'$,  $\RV(\theta')\geq 0$.
\end{enumerate}

Note that, IC ensures that the buyers would truthfully reveal both their valuations and their neighbourhoods. 
Aside from the properties above, one would also wish the mechanism to achieve a high social welfare. Namely, the {\em social welfare} $\SW(\theta')$ under the mechanism $\M$ is the sum of the utilities of all agents, i.e., $\SW(\theta')\coloneqq \sum_{i=1}^n v_i(x_i)$. 

\paragraph{The classical model.} When the seller $s$ is  connected with  all the buyers via edges, i.e., $E=\{s\}\times B$ and thus $r_s=B$, only the valuation of the buyers are hidden information. There is then no need for information propagation as all buyers in $B$ are assumed to join the auction by default, and the profile of each buyer $i$ is reduced to only the valuation, i.e., $\theta_i=v_i$. This case coincides with the classical auction model without taking into account the social ties between buyers. In this case, a mechanism $\M$ only needs to take the reported valuations $v'_i$, $i\in B$, as input, while the IC, IR, and ND properties of $\M$ can be defined only using the valuations of the buyers. 

Instead of crafting mechanisms for specific auction scenarios, we seek a generic scheme  that transforms a given mechanism designed for the classical model into a mechanism in the more general setup where the underlying social network is not assumed to be a star graph. We call such a generic scheme a {\em meta-mechanism}.

\section{Meta-Mechanism for Social Networks}
\label{sec:MetaMSN}

\subsection{Method} 
In this section, we present our meta-mechanism for social network, {\em MetaMSN}.  
For any buyer who are $h$-hops away from the seller, they can only join the auction when  some of the 1-hop neighbour of $s$ passes the information along their outgoing edges. However, 
as the heightened level of competition resulted from attracting more buyers to the auction potentially harm the utility of an existing buyer, a proper incentive mechanism is needed to ensure that buyers' reporting their neighbourhood truthfully. The key issue in designing an auction over social networks is thus to inject incentives to the mechanism so that allowing the neighbours $r_i$ of a buyer $i$ to join the auction {\em will not} harm the utility of $i$.

To mitigate this issue, we design a meta-mechanism based on the idea of {\em graph exploration}. In short, information propagation during the auction can be viewed as an exploration process of the graph $G(\theta')$ starting from the seller $s$. The {\em explored region} contains all buyers who have joined the auction. At the beginning of the auction, the seller $s$ ``explores'' only her neighbours in $r_s$. In each iteration, the seller would incentivise some buyers in the explored region to truthfully report their neighbourhoods so that more buyers would join the auction, thereby expanding the explored region. There are two ways in which a buyer would be incentivised: {\em exhausting} or {\em satisfying}. The former refers to declaring that this buyer would lose the bid and no longer be considered in the auction. The latter refers to satisfying the buyer's demand by securing their demanded items for them. In either case, allowing the buyer's neighbours to join the auction would not affect the buyer's own utility. At each iteration, a priority order is used to select a winner. In this way, the mechanism iteratively explores the graph $G(\theta')$ until the explored region cannot be expanded further.

We now define some necessary terminologies before formally presenting the meta-mechanism:
\begin{itemize}[leftmargin=*]
\item {\bf Residual item vector}:  Let the vector $x$ keep track of the items that are to be sold at the auction. At a given iteration, the vector is updated by  $x= x^0- \sum_{i\in B} x_i$, where $x_i$ is the vector indicating the items allocated to $i$.

\item {\bf Explored buyers $A$}: Initially, the set of explored buyers $A=r_s$, i.e., neighbours of $s$. Then at each iteration, the set $A$ is updated by declaring an explored buyer {\em exhausted} or a {\em winner}  (introduced below) using the following procedures:
\begin{itemize}[leftmargin=*]
    \item Repeatedly adding reported neighbours of exhausted agents until no more buyer can be added.
    \item Adding the reported neighbours of a chosen winner.
\end{itemize}

\item {\bf Potential winner set $\PWin$}: A {\em winner} refers to any buyer $i$ who has been allocated any item by the mechanism, i.e., $x_i\geq \vect{0}$. At the given iteration, a buyer $i\in A$ is a {\em potential winner} if there is a chance that $i$ wins in the current iteration. Specifically, a buyer $i$ is a potential winner if $i$ is a winner in $\tilde{\M}$ among $A\backslash W$.

\item {\bf Exhausted agent}: At the given iteration, a buyer in $A\backslash \PWin$ is called an {\em exhausted agents} as this buyer cannot win in the current iteration or any future iteration. The mechanism 
will ensure that an exhausted agent stays exhausted.

\item {\bf Priority $\sigma_i$}: Once the potential winner set $\PWin$ is updated, the algorithm sets {\em priority scores} $\sigma_i$ as the number of reported neighbours of $i, \forall i\in \PWin$,  as a metric to select a winner. 

\item {\bf Winner set $W$}: In each iteration, the buyer with the highest priority in $\PWin$ is selected as the winner and is added to $W$.


\item {\bf Termination condition}: Terminate the graph exploration if all explored buyers are either winners or exhausted, or all items are allocated, i.e., $P\backslash W = \emptyset$ or $x=\vect{0}$.  
\end{itemize}

As described in Algorithm~\ref{alg:MetaMSN}, the algorithm operates in an iterative manner, commencing with the initialization of several key components: the residual item vector $x$, the set of explored buyers $A$, the potential winner set $\PWin$, and the winner set $W$. The process continues until a point where either all buyers have been designated as winners or exhausted, or when all items have been allocated. During each iteration, MetaMSN methodically expands the explored buyer set. This expansion involves identifying and marking unexplored buyers, subsequently updating $A$ and $\PWin$. 
Upon completing the exploration of the graph and establishing the potential winner set, MetaMSN then assigns a priority score $\sigma_i$ to each potential winner $i\in \PWin$. The buyer with the highest priority score, denoted as $i^*$, is then selected for the current iteration. The algorithm determines the allocation for this buyer $i^*$ using the allocation function $\tilde{\pi}_{i^*}(\theta'_{A\backslash W})$, which is derived from the classical mechanism $\tilde{\M}$. Simultaneously, the payment of $i^*$ is set according to $\tilde{p}_{i^*}(\theta'_{A\backslash W})$, which is the payment function returned by $\tilde{\M}$. This process iteratively identifies and rewards winners until the termination condition is met.

\begin{algorithm}[h] 
\caption{The MetaMSN Algorithm} 
\label{alg:MetaMSN} 
\footnotesize
\begin{algorithmic}[1]
\Require Global profile $\theta'$ and classical $\tilde{\M}=(\tilde{\pi},\tilde{p})$
\Ensure Allocation result $\pi(\theta')$ and payment result $p(\theta')$
\State Initialise  $x \gets x^0$, $A \gets \{s\}$,  $W\gets \emptyset$ 
\State Initialise $\PWin$ as the winner set obtained by applying $\tilde{\M}$ on $\theta_{r_s}'$
\While{$\PWin\backslash W \neq \emptyset$ and $x \neq \vect{0}$}
    \While{$A$ contains an unmarked  $i\in W \cup (A\backslash \PWin)$} \label{ln:exploration_start}
         \State Update $A\gets A\cup r_i'$, mark  $i$
         \State Update $\PWin$ as the winners of $\tilde{\M}$ over $A\backslash W$
    \EndWhile \label{ln:exploration_end}
    \State Assign a priority $\sigma_i$ to each $i\in \PWin$ \label{ln:priority}
    \State Set $\pi_{i^*}(\theta'_{A\backslash W})\gets \tilde{\pi}_{i^*}(\theta'_{A\backslash W})$ and $p_{i^*}(\theta'_{ A \backslash W}) \gets \tilde{p}_{i^*}(\theta'_{A \backslash W})$ for $i^*\in \PWin$ with the top $\sigma_i$-priority \label{ln:pi&p}
    \State Update $W\gets W \cup \{i^*\}$ and  $x \gets x^0 -\sum_{i\in B}x_i$  
\EndWhile
\end{algorithmic}
\end{algorithm}

\vspace{-0.6cm}

\subsection{Analysis}

We now focus on the properties preserved by MetaMSN. 
For this, we make the following important definition: 


\begin{definition}    Let $\tilde{\M}$ be a mechanism in the classical auction model. We say that $\tilde{\M}$ has the {\em non-sensitivity property} if whenever a winner $w$ changes her reported valuation function $v_w'$, either $w$'s allocation becomes $0$ (i.e., $w$ ceases to be a winner), or the allocations of all buyers remain the same. 
\end{definition}

The non-sensitivity property captures an important class of mechanisms in the classical model. Below 
we give two simple examples of mechanisms that meet this property. 


\noindent{\bf (1) Single-unit auction.} Here, the seller $s$ has only one item for sell, i.e., $m=1$, which is valuated by each buyer $i \in B$ with a single value $v_i\in \R^+$. The second-price auction, i.e., one that allocates the item to the buyer with the highest reported valuation and charges the winner by the second highest reported valuation, 
achieves the IC, IR, and ND properties for single-unit auction in the classical setting. The second-price auction trivially satisfies the non-sensitivity property.



\noindent{\bf (2) Multi-unit auction.} Here, the seller $s$ has $m > 1$ homogeneous items for sell.  Each buyer $i\in B$ has an unit demand (i.e., has only positive valuation to a single item) and reports $v_i\in \R^+$ to express her valuation of a single item. The $(m+1)$-th price auction (MPA) generalises the second-price auction by allocating an item to each of the buyers who report the $m$-highest valuations and charges them by the $(m+1)$-highest reported valuation. If a winner misreports her valuation, then either she would no longer be a winner, or her reported valuation stays within the $m$-highest among all reported valuation, thereby not affecting the allocation of other buyers. Thus MPA satisfies the non-sensitivity property. 


\begin{theorem}\label{thm:MetaMSN}
Suppose $\tilde{\M}$ is a mechanism for an auction scenario that satisfies the non-sensitivity property. Then the mechanism $\M$ obtained from $\tilde{\M}$ by applying MetaMSN is IC, IR, \& ND, whenever $\tilde{\M}$ is IC, IR, \& ND, respectively. 
\end{theorem}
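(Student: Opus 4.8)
The plan is to treat the three properties separately, transferring each from $\tilde{\M}$ through the iterative structure of MetaMSN, and to isolate two structural facts that do the real work. The first is \emph{diffusion locality}: a buyer's reported neighbours are used to expand the explored set $A$ only in an iteration where that buyer already lies in $W$ or in $A\backslash \PWin$, so every buyer that a given $i$ meets as a competitor in a call to $\tilde{\M}$ is fixed before any neighbour of $i$ enters. The second is \emph{order-irrelevance}: securing the bundle of a selected winner $i^*$ and re-running $\tilde{\M}$ on the remaining pool leaves every other buyer's allocation unchanged, which is exactly non-sensitivity applied to $i^*$ after raising its reported valuation so that it remains a winner with an unchanged bundle. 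I would record these as lemmas, alongside the \emph{exhaustion-persistence} claim asserted in the text (an exhausted agent stays exhausted), proved from the observation that across iterations the competitor pool $A\backslash W$ only gains entrants through exploration and loses only already-secured winners, neither of which can promote a $\tilde{\M}$-loser while the residual item vector only shrinks.

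ND is the most direct: every charge levied by $\M$ equals a payment $\tilde{p}_{i^*}(\theta'_{A\backslash W})$ that $\tilde{\M}$ assigns to one of its winners, and since a winner pays the mechanism rather than being subsidised, each such charge is non-negative; summing over the selected winners gives $\RV(\theta')\geq 0$. IR is almost as direct: a buyer never selected receives the empty bundle and pays nothing, hence has utility $0$; a buyer $i$ who reports truthfully and is selected receives exactly the pair $(\tilde{\pi}_i,\tilde{p}_i)$ that $\tilde{\M}$ returns for $i$ on the sub-instance $\theta'_{A\backslash W}$, in which $i$'s reported valuation is its true $v_i$, so classical IR of $\tilde{\M}$ yields $v_i(x_i)-p_i\geq 0$.

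The substance is IC, which I would split, for a fixed $\theta_{-i}$, into neighbour- and valuation-truthfulness. For neighbours, diffusion locality ensures that the buyers entering because of $r_i'$ never compete with $i$, while order-irrelevance ensures that the priority $\sigma_i=|r_i'|$, the only other quantity $r_i'$ controls, affects merely the order of selection and not the bundle or price any winner finally obtains; hence $i$'s outcome is invariant to $r_i'$. For valuations, with $r_i'$ held fixed I would track the first iteration $t$ at which the runs under $v_i$ and under a deviation $v_i'$ diverge: as long as $i$ is a potential winner in both, non-sensitivity forces identical $\tilde{\M}$-allocations, hence identical potential-winner sets, priorities and selections, so the runs proceed in lockstep and differ only in $i$'s charge, which classical IC of $\tilde{\M}$ bounds in $i$'s favour. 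Any divergence must therefore be a flip of $i$'s winner/loser status on the \emph{common} pool $A_t\backslash W_t$; a case split using exhaustion-persistence (a loser earns $0$ forever) together with classical IC and IR of $\tilde{\M}$ on that common pool shows each deviation is at most as good as truthful. Chaining $u_i(v_i',r_i')\leq u_i(v_i,r_i')=u_i(v_i,r_i)$ then gives full IC.

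I expect the main obstacle to be order-irrelevance and, through it, the claim that the competition a buyer faces is effectively independent of its own report. The delicate point is that interleaving graph exploration with winner selection means different selection orders expand $A$ at different times; turning non-sensitivity into a clean statement that a buyer's realised allocation and payment are nonetheless order-invariant, and doing so while the residual item vector is depleted across iterations so that each call to $\tilde{\M}$ is on a shrinking item set, is where the argument needs the most care and is precisely where the non-sensitivity hypothesis is indispensable.
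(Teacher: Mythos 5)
Your overall decomposition (IR and ND transferred directly, IC split into valuation- and neighbourhood-truthfulness) and your valuation-truthfulness argument track the paper's proof closely: your ``lockstep until first divergence'' analysis is a slightly more careful rendering of the paper's Cases 1--3, with non-sensitivity invoked exactly where the paper invokes it (a non-selected potential winner who misreports yet stays a winner of $\tilde{\M}$). Your ``diffusion locality'' observation is also correct and is implicit in the paper's design. The genuine gap is the lemma you yourself flag as the crux: \emph{order-irrelevance}. It fails on two counts. First, it does not follow from non-sensitivity: non-sensitivity compares two runs of $\tilde{\M}$ on the \emph{same} buyer pool differing only in one winner's reported valuation; it says nothing relating such a run to the run MetaMSN actually performs next, namely $\tilde{\M}$ on the pool with that winner \emph{removed} (the set $A\backslash W$) and with her items deleted from the residual vector. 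Raising $i^*$'s reported valuation keeps $i^*$ in the instance and cannot simulate her removal. Second, the lemma is simply false even for non-sensitive mechanisms. Take MPA (non-sensitive, per the paper) with two items and explored pool $\{i,j,k\}$ of values $10,9,8$, where $i$ has two neighbours valued $20,15$ and $j$ has three neighbours of negligible value. Truthfully, $j$ has top priority ($3>2$): she wins an item at price $8$ (utility $1$), her low-value neighbours join, and $i$ wins the other item at price $8$. If instead $i$ is selected first (which happens precisely when $j$ hides her neighbours), $i$'s high-value neighbours join, the value-$20$ entrant takes the remaining item at price $15$, and $j$ is exhausted with utility $0$. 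So the selection order changes other buyers' final bundles and prices, and a buyer's outcome is \emph{not} invariant to her reported neighbourhood.

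Consequently the equality $u_i(v_i,r_i')=u_i(v_i,r_i)$ in your final chaining is unavailable (the example shows it can fail strictly). What the theorem needs, and what the paper proves, is only the inequality $u_i(v_i,r_i')\leq u_i(v_i,r_i)$, and it follows without any order-invariance: since $r_i'\subseteq r_i$, the priority $\sigma_i=|r_i'|$ can only decrease when neighbours are hidden, and by diffusion locality the pool $i$ faces at selection time is unaffected by her own neighbour report; so either $i$ retains top priority (identical allocation and payment) or she loses it, is not selected in that iteration, and --- as the paper argues --- her utility does not exceed the truthful one. A secondary soft spot: your ND step assumes each selected winner's payment under $\tilde{\M}$ is individually non-negative, but ND of $\tilde{\M}$ constrains only the \emph{sum} of payments within a single run, whereas $\RV(\theta')$ in MetaMSN collects one winner's payment from each of several distinct runs of $\tilde{\M}$; the paper is admittedly just as terse here (``trivially follows''), but your justification is an added assumption rather than a derivation from ND.
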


\begin{proof}
For {\bf IR}, consider the mechanism $\M$ obtained by applying MetaMSN to $\tilde{\M}$. If a buyer $i$ is exhausted by $\M$, then the utility of $i$ is 0. Otherwise,  $i$ is selected as a winner in some iteration. By definition, $\M$ maintains the allocation and payment rules of $\tilde{\M}$ for $i$, thereby ensuring $u_i(\theta_{A\backslash W})=\tilde{u}_i(\theta_{A\backslash W})$. The IR property thus directly follows from that of $\tilde{\M}$. 


\smallskip

For {\bf IC}, we first show that {\em no buyer can benefit from misreporting her valuation}. During an iteration of $\M$, a buyer $i\in A\backslash W$ falls in one of the following three cases:

\noindent{\bf Case 1:} 
$i\in \PWin$ is selected as a winner in this iteration when she reports $\theta_i'=(v_i,r_i')$ with her true valuation $v_i$. Write $\theta_{-i}$ for the profiles of buyers in $A\backslash W$ except $i$. Line~\ref{ln:pi&p} of Algorithm~\ref{alg:MetaMSN} sets $i$'s payment and allocation as $\pi_{i}((v_i,r_i'), \theta_{-i})= \tilde{\pi}_{i}(v_i, \theta_{-i})$ and $p_{i}((v_i,r_i'), \theta_{-i})= \tilde{p}_{i}(v_i, \theta_{-i})$, resp. By IC  of $\tilde{\M}$,  $u_i((v_i,r_i'), \theta_{-i}) = \tilde{u}_i(v_i,\theta_{-i}) \geq \tilde{u}_i(v_i',\theta_{-i}) = u_i((v_i',r_i), \theta_{-i})$, for any $\theta_{i}, \theta_{i}'=(v_i',r_i')$, and $\theta_{-i}$.

\noindent{\bf Case 2:} $i \in \PWin$ is not selected as a winner in this iteration when she reports profile $\theta_i'=(v_i,r_i')$. By definition of potential winner, $i$'s allocation in $\tilde{\M}$ is $\tilde{\pi}_{i}(v_i,\theta_{-i})\neq \vect{0}$, and her allocation in $\M$ is $\pi_{i}((v_i,r_i'),\theta_{-i})= \vect{0}$ and payment is $0$. There are two sub-cases: {\bf (a)} If $i$ misreports a profile $\theta_i'=(v_i',r_i')$ such that her allocation $\tilde{\pi}_{i}(v_i,\theta_{-i})$ in $\tilde{\M}$ is still not $\vect{0}$, then by non-sensitivity of $\tilde{\M}$, all other buyers' allocations remain the same and so do the potential winner set $\PWin$ and $i$'s priority.  Thus her allocation $\pi_{i}(\theta'_{A\backslash W})$ is still $\vect{0}$ and her utility is $u_i((v_i',r_i'), \theta_{-i}) = u_i((v_i,r_i'), \theta_{-i})=0$. {\bf (b)} If she misreports a valuation $\theta_i'=(v_i',r_i')$ such that her allocation  $\tilde{\pi}_{i}(v_i,\theta_{-i})$ in $\tilde{\M}$ becomes $\vect{0}$, she is exhausted in $\M$. Hence $i$'s utility $u_i((v_i',r_i'), \theta_{-i}) =0 \leq u_i((v_i,r_i'), \theta_{-i})$.


\noindent{\bf Case 3:} $i\notin \PWin$ when she reports profile $(v_i,r_i')$. Again, by IC of $\tilde{\M}$, we have $u_i((v_i,r_i'), \theta_{-i}) = \tilde{u}_i(v_i',\theta_{-i}) \geq \tilde{u}_i(v_i',\theta_{-i}) = u_i((v_i',r_i'), \theta_{-i}),$ for all $\theta_{i}'=(v_i',r_i')$. 

It remains to show that {\em no buyer can benefit from misreporting her neighbourhood}. Consider a buyer $i\in A\backslash W$ in an iteration. 

\noindent{\bf Case 1.} $i$ is the selected winner  when she reports her true neighbourhood $r_i$. Imagine $i$ hides some of her neighbours, i.e., reporting $r_i'\subset r_i$.  Then if she remains the top priority, $i$'s allocation and payment would not change, and her utility $u_i((v_i',r_i'),\theta_{-i})=u_i((v_i',r_i),\theta_{-i})$. If, on the other hand, her priority is no longer the highest, she will then not be a winner in this iteration and her utility drops, i.e., $u_i((v_i',r_i'),\theta_{-i})=0\leq u_i((v_i',r_i),\theta_{-i})$. 

\noindent{\bf Case 2.} $i$ is not a selected winner when she reports her true neighbourhood $r_i$. Hiding any of her neighbour would not increase her priority. Hence, she is not allocated any item and $u_i((v_i',r_i'),\theta_{-i})=u_i((v_i',r_i),\theta_{-i})=0$. 

Lastly, {\bf ND} of $\M$ trivially follows ND of $\tilde{\M}$.
\end{proof}

By Theorem~\ref{thm:MetaMSN}, both the second-price auction and MPA can be transformed into auctions over social networks that are IC, IR, and ND. We further remark that in the multi-unit auction scenario, the resulting mechanism transformed from MPA coincides with the MUDAN mechanism \cite{fang2023multi}. Our work demonstrates that MUDAN belongs to a general class of auctions in a social network that are produced by MetaMSN.  

There is no mechanism that is IC, IR, ND, and maximises social welfare \cite{takanashi2019efficiency}. However, we have the lower bound of the social welfare achieved by MetaMSN under certain assumptions. 

\begin{theorem}\label{thm:MetaMSN_sw}
Given a classical mechanism $\tilde{\M}$ that maximises social welfare, the social welfare of $\M$ obtained from $\tilde{\M}$ by applying MetaMSN is no less than the social welfare of $\tilde{\M}$ over the seller's neighbourhood.
\end{theorem}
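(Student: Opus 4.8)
The plan is to track, via a loop invariant, how much social welfare MetaMSN has already locked in through its winners together with how much remains achievable on the residual problem, and to show this total never drops below the welfare of $\tilde{\M}$ over $r_s$. Write $\opt(S,x)$ for the maximum social welfare obtainable by allocating the item vector $x$ among the buyers in $S$. Since $\tilde{\M}$ maximises social welfare, the winner set it returns on a buyer set $S$ with residual items $x$ realises exactly $\opt(S,x)$, and the quantity in the statement is $\SW_{\tilde{\M}}(r_s)=\opt(r_s,x^0)$. Two elementary facts underpin everything. First, $\opt(S,x)\le\opt(S',x)$ whenever $S\subseteq S'$, because any allocation feasible for $S$ is feasible for $S'$ by assigning $\vect{0}$ to the extra buyers. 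Second, if $\{\tilde{\pi}_j\}_j$ is an optimal allocation on $(S,x)$ and we delete a single winner $i^*$ together with its bundle, the remaining bundles are feasible for the residual items $x-\tilde{\pi}_{i^*}$, whence $\opt\bigl(S\setminus\{i^*\},\,x-\tilde{\pi}_{i^*}\bigr)\ge\opt(S,x)-v_{i^*}(\tilde{\pi}_{i^*})$.

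Next I would index the outer iterations of Algorithm~\ref{alg:MetaMSN} by $t$, letting $W_t$, $A_t$, and $x^{(t)}=x^0-\sum_{j\in W_t}x_j$ denote the winner set, explored set, and residual item vector at the end of iteration $t$, and prove by induction on $t$ the invariant
\[
\sum_{j\in W_t}v_j(x_j)\;+\;\opt\!\bigl(A_t\setminus W_t,\;x^{(t)}\bigr)\;\ge\;\SW_{\tilde{\M}}(r_s).
\]
For the base case, after the initial exploration the explored set satisfies $A_0\supseteq r_s$ while $W_0=\emptyset$ and $x^{(0)}=x^0$, so the left-hand side equals $\opt(A_0,x^0)\ge\opt(r_s,x^0)=\SW_{\tilde{\M}}(r_s)$ by the first monotonicity fact. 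For the inductive step, iteration $t+1$ first grows the explored set to $A_{t+1}\supseteq A_t$, recomputes $\PWin$ as the winners of $\tilde{\M}$ on $\bigl(A_{t+1}\setminus W_t,\,x^{(t)}\bigr)$ -- so $\PWin$ realises $\opt(A_{t+1}\setminus W_t,x^{(t)})$ -- and then allocates to the selected $i^*\in\PWin$ its bundle $x_{i^*}=\tilde{\pi}_{i^*}$. Applying the second monotonicity fact to delete $i^*$ and then the first to pass from $A_t$ to $A_{t+1}$ gives $v_{i^*}(x_{i^*})+\opt(A_{t+1}\setminus W_{t+1},x^{(t+1)})\ge\opt(A_{t+1}\setminus W_t,x^{(t)})\ge\opt(A_t\setminus W_t,x^{(t)})$; adding $\sum_{j\in W_t}v_j(x_j)$ to both sides and invoking the inductive hypothesis re-establishes the invariant.

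Finally I would discharge the invariant at termination, where the loop stops because $\PWin\setminus W=\emptyset$ or $x=\vect{0}$. Since $\PWin$ is by construction a subset of $A\setminus W$ and hence disjoint from $W$, the first case forces $\PWin=\emptyset$, meaning $\tilde{\M}$ finds no winner on the residual problem and so $\opt(A_T\setminus W_T,x^{(T)})=0$; in the second case $x^{(T)}=\vect{0}$ leaves no items, so the residual term again vanishes because the empty bundle has valuation $0$. Either way the invariant collapses to $\sum_{j\in W_T}v_j(x_j)\ge\SW_{\tilde{\M}}(r_s)$, which is exactly the social welfare achieved by $\M$. The main obstacle I anticipate is precisely the mismatch between MetaMSN's one-winner-at-a-time greedy selection and the batch optimum computed by $\tilde{\M}$: the residual-welfare term in the invariant, combined with the decomposition of an optimal allocation after removing one winner, is what bridges this gap. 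A secondary point requiring care is fixing the semantics that each invocation of $\tilde{\M}$ operates on the residual vector $x^{(t)}$ rather than on the full $x^0$, so that the bundles awarded to successive winners remain jointly feasible and the residual terms telescope as required.
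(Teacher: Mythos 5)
Your proposal follows essentially the same route as the paper's own proof: the paper also runs a telescoping argument in which the quantity $\sum_{1\le l\le t}\theta^{(l)}_{w^{(l)}}+\SW^{(t)}_{\tilde\M}\bigl(\theta'_{A^{(t+1)}\setminus W^{(t)}}\bigr)$ (locked-in winner values plus the optimum of the residual problem) is shown to be non-decreasing across iterations, using exactly the two facts you isolate -- monotonicity of the optimum in the buyer set, and the decomposition of an optimal allocation when one winner and her bundle are deleted -- and then evaluates this quantity at initialization and at termination. Your explicit loop invariant is a cleaner packaging of the same idea, and your base case and inductive step are sound.

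However, one step of your termination analysis is wrong as written. You claim that, in the termination case $\PWin\setminus W=\emptyset$, the set $\PWin$ is disjoint from $W$ and hence must be empty, so the residual optimum is $0$. But at the moment the loop guard is re-checked, $W$ has just been updated to $W_T=W_{T-1}\cup\{i^*\}$ while $\PWin$ has \emph{not} been recomputed: it is still the winner set of $\tilde\M$ over $A_T\setminus W_{T-1}$ with items $x^{(T-1)}$, and it contains the just-selected winner $i^*$. So $\PWin\setminus W_T=\emptyset$ forces $\PWin=\{i^*\}$, not $\PWin=\emptyset$. The conclusion you need, $\opt\bigl(A_T\setminus W_T,x^{(T)}\bigr)=0$, is still true, but it requires one more line: since $i^*$ is the \emph{only} winner of $\tilde\M$ on $\bigl(A_T\setminus W_{T-1},x^{(T-1)}\bigr)$, the optimum there equals $v_{i^*}(x_{i^*})$; if the residual optimum after removing $i^*$ and her bundle were positive, appending that residual allocation to $i^*$'s bundle would give a feasible allocation on $\bigl(A_T\setminus W_{T-1},x^{(T-1)}\bigr)$ with welfare strictly above $v_{i^*}(x_{i^*})$, contradicting optimality. (The paper compresses this entire point into the assertion that its first equation ``follows from Line~\ref{ln:pi&p}'', so your explicit treatment is precisely where the subtlety surfaces.) With that one-line repair your invariant discharges as intended and the proof is complete.
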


\begin{proof}

Let $w^{(t)}$ be the selected winner, $W^{(t)}$ be the winner set, $A^{(t)}$ be the explored buyer set, and $x^{(t)}$ be the residual item vector at the end of the $t$-th iteration of MetaMSN $\M$. Note that, for any $t$, we have $x^{(t)} = x^{(t-1)} - x_{w^{(t)}}$,  $W^{(t)} = W^{(t-1)} \cup \{w^{(t)}\}$, and $A^{(t)} \supseteq A^{(t-1)}$. 
Now suppose we run the mechanism $\tilde{\M}$ over the buyers in $A^{(t)} \backslash W^{(t-1)}$ and items in $x^{(t)}$. Denote by $\SW_{\tilde{\M}}^{(t)}(\cdot)$ the resulting social welfare (i.e., sum of utilities of buyers in $A^{(t)} \backslash W^{(t-1)}$ and the seller). Note that the input to $\SW_{\tilde{\M}}^{(t)}(\cdot)$ is the input to the mechanism $\tilde{\M}$. 
Let $\theta^{(t)}_{i}$ be the gained utility of buyer $i$, i.e., her utility plus her payment, in such an $\tilde{\M}$.
Then, $\SW_{\tilde{\M}}^{(t)}(\cdot)$ can be written as the gained utility of a winner, say $w^{(t)}$,  and the total utility of other buyers. That is,  $\SW_{\tilde{\M}}^{(t-1)} \left(\theta'_{A^{(t)}\backslash W^{(t-1)}}\right) = \theta^{(t)}_{w^{(t)}} + \SW_{\tilde{\M}}^{(t)} \left(\theta'_{A^{(t)}\backslash W^{(t)}} \right)$. 
This is equivalent to 
$$\SW_{\tilde{\M}}^{(t)} \left(\theta'_{A^{(t)}\backslash W^{(t)}} \right) = \SW_{\tilde{\M}}^{(t-1)} \left(\theta'_{A^{(t)}\backslash W^{(t-1)}}\right) -  \theta^{(t)}_{w^{(t)}}. $$

Given the equation above, we have 
\begin{equation}
\label{eqn:induction}
\begin{aligned}
& \SW_{\tilde{\M}}^{(t)} \left(\theta'_{A^{(t+1)}\backslash W^{(t)}}  \right) + \sum_{1 \leq l \leq t } \theta^{(l)}_{w^{(l)}}
\\
\geq &  \SW_{\tilde{\M}}^{(t)} \left(\theta'_{A^{(t)}\backslash W^{(t)}}  \right) + \sum_{1 \leq l \leq t } \theta^{(l)}_{w^{(l)}}
\\
 = & \SW_{\tilde{\M}}^{(t-1)} \left(\theta'_{A^{(t)}\backslash W^{(t-1)}}\right) -  \theta^{(t)}_{w^{(t)}} + \sum_{1 \leq l \leq t } \theta^{(l)}_{w^{(l)}}\\
= & \SW_{\tilde{\M}}^{(t-1)} \left(\theta'_{A^{(t)}\backslash W^{(t-1)}}\right) + \sum_{1 \leq l \leq t -1 } \theta^{(l)}_{w^{(l)}}.  \\        
\end{aligned}
\end{equation}

\noindent The first inequation holds as the classical mechanism $\tilde{\M}$ maximises the social welfare, and the social welfare of $\tilde{\M}$ over a larger buyer set $A^{(t+1)}\backslash W^{(t)}$ must be no less than that over a smaller one $A^{(t)} \backslash W^{(t)}$. 

Now we consider the implementation of the MetaMSN mechanism $\M$. Assume that $\M$ terminates in the $T$-th iteration. We have 
\begin{align*}
    \SW_{\M}(\theta') &= \SW_{\tilde{\M}}^{(T-1)} \left(\theta'_{A^{(T)}\backslash W^{(T-1)}}  \right) + \sum_{1 \leq l \leq T -1 } \theta^{(l)}_{w^{(l)}} \\
    &\geq \SW_{\tilde{\M}}^{(T-2)} \left(\theta'_{A^{(T-1)}\backslash W^{(T-2)}}  \right) + \sum_{1 \leq l \leq T -2 } \theta^{(l)}_{w^{(l)}} \\
    &\geq \ldots  \\
    &\geq \SW_{\tilde{\M}}^{(1)} \left(\theta'_{A^{(2)}\backslash W^{(1)}}  \right) +  \theta^{(1)}_{w^{(1)}} \\
    &\geq \SW_{\tilde{\M}}^{(0)} \left(\theta'_{A^{(1)}\backslash W^{(0)}}  \right)\\ 
    &\geq \SW_{\tilde{\M}}^{(0)} \left(\theta'_{r_s \backslash W^{(0)}}  \right) \\
    &=\SW_{\tilde{\M}} (\theta'_{r_s}).
\end{align*}
The first and the last equations are derived due to Line~\ref{ln:pi&p} of Alg.~\ref{alg:MetaMSN} while the inequations are due to Equation~\eqref{eqn:induction}. 
\end{proof}

\section{Combinatorial Auction with Single-minded Buyers}
To study combinatorial auction design over social networks, we first focus on the special case when the buyers are {\em single-minded}, i.e., each buyer demands a specific and publicly-known bundle $x_i\in X$ and submits a single bid to express her valuation $v_i$ for $x_i$.   In this scenario, no partial allocation is allowed, i.e., each buyer is allocated either the entire bundle she demands, or nothing. The valuation of any other bundle $x'\neq x_i$ is 0.  This scenario  provides a simple  yet practically-relevant framework which serves as a stepping stone towards understanding more general combinatorial auction scenarios.

For this scenario, we revisit the  mechanism proposed by Lehmann, Ocallaghan, and Shoham (LOS) \cite{lehmann2002truth}. For each buyer $i\in B$, define the {\em average valuation}, denoted by $av_i$,  as the valuation $v_i$ for the desired bundle $x_i$ divided by the number of items in the bundle, i.e., $av_i\coloneqq v_i / (x_i \cdot \vect{1})$. 
The LOS mechanism employs a strategy where buyers are ranked in a descending order of their average valuations. Items are then allocated sequentially to these buyers according to this ranking. If any item in a buyer's desired bundle has been allocated to another buyer, the mechanism bypasses this buyer and moves on to the next in line. This process continues until either all items are allocated, or every buyer has been considered. The payment required from each allocated buyer $i$ is determined by the average valuation of the next buyer in the sorting sequence, multiplied by the number of items in $i$'s bundle, i.e., $p_i=av_{i+1} (x_i \cdot \vect{1})$. 



\begin{lemma}
    The LOS mechanism is non-sensitive. 
\end{lemma}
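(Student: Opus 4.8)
The plan is to verify the non-sensitivity property directly against its definition: I must show that whenever a winner $w$ under LOS changes her reported valuation $v_w'$, then either $w$ drops out of the winner set, or the allocation of every buyer is unchanged. The natural approach is to analyze how a change in $v_w'$ affects the descending ordering of average valuations $av_i$, since the entire LOS allocation is a deterministic greedy function of this ordering. First I would fix the reported bundles $x_i$ (these are public and fixed for single-minded buyers) and all other buyers' valuations $\theta_{-w}$, and observe that only $av_w = v_w'/(x_w\cdot\vect{1})$ moves as $w$ varies her bid. So the key object to track is the rank position of $w$ in the sorted sequence.

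The central observation I would establish is a \emph{monotonicity/insensitivity of the greedy allocation to the winner's upward rank movement}. Suppose $w$ is a winner at her truthful bid, occupying some position $k$ in the sorted order. The greedy scan allocates $w$ her full bundle exactly because, at the moment the scan reaches position $k$, none of the items in $x_w$ have yet been claimed by buyers ranked above $w$. Now I would split into two cases. \textbf{Case 1:} $w$ raises her bid (increasing $av_w$), so she moves to an earlier position $k'\le k$. I would argue that everyone who was allocated before position $k$ and whose allocation did not conflict with $x_w$ is still processed in the same relative order among themselves, and since $w$'s bundle was conflict-free when she was at position $k$, promoting her earlier cannot take any item that a higher-ranked buyer needed — hence every buyer keeps exactly the same bundle. \textbf{Case 2:} $w$ lowers her bid, moving to a later position $k''\ge k$. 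Here the danger is that some buyer now ranked between $k$ and $k''$ grabs an item of $x_w$, causing $w$ to lose her bundle; but in that event $w$ simply ceases to be a winner, which is precisely the first disjunct of the non-sensitivity definition, so this case is covered. The only remaining sub-case of Case 2 is that $w$ still gets her bundle despite the demotion, and I would show that this forces the allocation to be unchanged for everyone, again by the conflict-free argument.

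The step I expect to be the main obstacle is \textbf{Case 1} (the upward move), specifically proving that promoting $w$ to an earlier position does not disturb the bundles allocated to the buyers she leapfrogs. The subtlety is that greedy allocation is order-sensitive in general: reordering two buyers can change who gets what. The crux is that the buyers $w$ jumps over were, by hypothesis, all processed \emph{before} $w$ in the original run and did not claim any item of $x_w$ (since $w$ successfully obtained $x_w$ in the original run at position $k$). I would make this precise by an induction on the scan position, showing that in the new run each of these buyers still sees exactly the same set of already-claimed items when the greedy reaches it, because $w$'s insertion only removes the items of $x_w$ from the ``available'' pool earlier than before, and those items were never wanted by any of the leapfrogged buyers. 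Therefore each leapfrogged buyer makes the identical accept/reject decision, and the induction closes with the full allocation preserved. Once this invariant is set up, Cases 2 and the losing case are essentially immediate, and the lemma follows from the definition of non-sensitivity.
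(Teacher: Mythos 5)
Your proposal is correct, and at its core it follows the same route as the paper: fix the other buyers' reports, observe that only $w$'s average valuation (hence only $w$'s rank in the sorted order) can move, and conclude that if $w$ remains a winner then every allocation is preserved. The difference is one of rigor rather than strategy. The paper's proof is essentially three sentences: if $w$ still wins she gets the same (single-minded, hence fixed) bundle, the other buyers' average valuations and relative ordering are unchanged, ``thus'' their allocations are unchanged. That final ``thus'' is exactly the step you flagged as the main obstacle, and you are right that it is not automatic: the greedy scan is order-sensitive, and $w$'s own rank shift changes when the items of $x_w$ leave the available pool, so in principle it could perturb the accept/reject decisions of buyers she leapfrogs (or who leapfrog her). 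Your induction on the scan position --- using the invariant that, when $w$ remains a winner, the claimed set seen by every other buyer differs between the two runs only by the items of $x_w$, which none of those buyers successfully claims --- is precisely the missing justification. One small point: in your Case 2 (downward move, $w$ still wins), the ``conflict-free argument'' needs the winning hypothesis contrapositively, i.e., a demoted-past buyer whose only conflict was with $x_w$ would grab those items and prevent $w$ from winning, so under the hypothesis that $w$ wins, no such buyer exists; spelling this out closes the induction. So your write-up is not a different proof but a completed version of the paper's, and arguably the version the paper should contain.
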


\begin{proof}
    Consider an arbitrary winner $w$ who misreports her valuation. If $w$ is still a winner, the mechanism would allocate the same bundle to her. Moreover, the average valuations of other buyers do not change and so does their relative ordering. Thus, 
the allocation of the others would be the same. 
\end{proof}

The following thus naturally follows from Theorem~\ref{thm:MetaMSN}.

\begin{theorem}
Let LOS-SN denote the mechanism over social networks resulted from applying MetaMSN to the LOS mechanism. 
    LOS-SN satisfies IC, IR, and ND. \qed
\end{theorem}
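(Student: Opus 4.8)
The plan is to reduce the statement to Theorem~\ref{thm:MetaMSN} by way of the preceding lemma. Since that lemma already establishes that the classical LOS mechanism is non-sensitive, Theorem~\ref{thm:MetaMSN} guarantees that LOS-SN inherits whichever of IC, IR, and ND the classical LOS mechanism itself enjoys. The whole task therefore collapses to verifying that the classical LOS mechanism is IC, IR, and ND in the classical model; once that is in hand, a single invocation of Theorem~\ref{thm:MetaMSN} finishes the proof.

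I would dispatch IR and ND first, since both follow immediately from the payment rule $p_i = av_{i+1}(x_i \cdot \vect{1})$. For ND, every reported average valuation is non-negative and every bundle size $x_i \cdot \vect{1}$ is positive, so each payment is non-negative and hence $\RV \geq 0$. For IR, recall that buyers are processed in descending order of average valuation, so $av_{i+1} \leq av_i$ for any allocated buyer $i$; consequently $p_i = av_{i+1}(x_i \cdot \vect{1}) \leq av_i(x_i \cdot \vect{1}) = v_i$, which gives utility $v_i - p_i \geq 0$, while any non-winning buyer pays $0$ and receives value $0$.

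The main obstacle is establishing IC of the classical LOS mechanism, which I would argue via the standard threshold (critical-value) analysis. A single-minded buyer $i$ has a publicly-known demanded bundle $x_i$, so her only manipulation is to misreport the valuation $v_i$. Fixing the reports of all other buyers, the competition for the items inside $x_i$ determines a critical average valuation $av^*$: buyer $i$ is allocated her bundle exactly when her reported average valuation exceeds this threshold, and in that event her charged payment $av_{i+1}(x_i \cdot \vect{1})$ coincides with $av^*(x_i \cdot \vect{1})$, which is independent of her own report. A routine case split---overreporting cannot lower the fixed threshold payment, and underreporting can only cost her a profitable allocation---then shows truthful reporting is dominant; this is precisely the truthfulness guarantee of Lehmann, Ocallaghan, and Shoham~\cite{lehmann2002truth}, which I would cite rather than reprove in full. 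Having thus confirmed that the classical LOS mechanism is IC, IR, and ND and is non-sensitive, I would apply Theorem~\ref{thm:MetaMSN} to conclude that LOS-SN satisfies IC, IR, and ND.
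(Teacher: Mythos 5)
Your proposal is correct and follows essentially the paper's own route: the paper proves this theorem by simply combining the preceding non-sensitivity lemma with Theorem~\ref{thm:MetaMSN}, taking the classical IC, IR, and ND of the LOS mechanism as known from \cite{lehmann2002truth}. Your only addition is to spell out the routine verification of those classical properties (IR/ND from the payment rule, IC via the critical-value argument), which the paper delegates to the citation.
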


\section{Combinatorial Auction with General Valuation Buyers}

\subsection{The DNS Mechanism: A Recap}
Our next goal is to design combinatorial auctions over social networks in the general setting. 
It is widely acknowledged that the deterministic mechanism for combinatorial auctions, such as VCG mechanism, is not computational efficient \cite{assadi2019improved}.  
We therefore resort to {\em randomised auction mechanisms} which define a probability distribution over a number of deterministic mechanisms. For any property Q of a deterministic mechanism, such a randomised auction is said to satisfy Q {\em in the universal sense} (Q-U) if any deterministic mechanism as its outcome satisfies the Q property. 

We now briefly recall the randomised mechanism proposed by Dobzinski, Nisan and Schapira  (DNS) \cite{dobzinski2006truthful}. The DNS mechanism operates as follows:

\begin{enumerate}[leftmargin=*]
    \item Buyer Grouping: Buyers are randomly allocated into three distinct groups: $\SECPRICE$, $\FIXED$, and $\STAT$. The allocation probabilities are $1-\epsilon$ for $\SECPRICE$, $\epsilon/2$ for $\FIXED$, and $\epsilon/2$ for $\STAT$, where $0 < \epsilon < 1$ is a parameter controlling the approximation ratio of social welfare.
    
    \item Optimal Fractional Solution: The mechanism computes the optimal fractional solution, denoted as $opt_{\text{stat}}$, among the buyers in the $\STAT$ group.
    
    \item Second-Price Auction for the Grand Bundle: A second-price auction is conducted for the grand bundle $x^0 = \vect{1}$ among buyers in the $\SECPRICE$ group. The reserve price is set as $p_{\text{sec}} = \frac{opt_{\text{stat}}}{\sqrt{m}}$. If there is a winning buyer, all items are allocated to her; otherwise, proceed to the next step.
    
    \item Fixed-Price Auction: In this stage, a fixed-price auction is held among the $\FIXED$ group buyers. Buyers are sorted arbitrarily, and items are allocated iteratively. During each iteration, a price vector $p_{\text{fix}} \in \mathbb{R}^m$ is set, where unsold items are priced at $\frac{\epsilon \cdot opt_{\text{stat}}}{8m}$ and sold items at infinity. Each buyer $i$ is allocated a bundle $x_i$ that maximises their utility, i.e., $x_i \in \arg\max_{y \in X} \left\{ v_i(y) - p_{\text{fix}} \cdot y \right\}$, followed by an update to the set of unsold items.
\end{enumerate}

Further insights into this mechanism can be found in \cite{dobzinski2006truthful}. The DNS mechanism offers a structured approach to allocating items and setting prices in a randomised combinatorial auction, aiming to achieve a balance between social welfare approximation and the IC and IR constraints. This mechanism satisfies IC-U, IR-U, and ND-U.

The DNS mechanism does not satisfy  non-sensitivity, and indeed, as we see in the next proposition, applying MetaMSN to DNS fails to preserve incentive compatibility. 

\begin{proposition}
The DNS-SN mechanism, derived from applying MetaMSN to the DNS mechanism, is not IC-U. 
\end{proposition}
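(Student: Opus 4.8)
The plan is to prove the negative result by exhibiting a concrete counterexample: a social network $G$, a set of single-minded (or general-valuation) buyers with specified valuations, and a specific random draw of the DNS grouping under which some buyer strictly benefits from misreporting. Since we must refute IC-U (truthfulness in the universal sense), it suffices to fix \emph{one} deterministic outcome of the DNS randomisation — i.e. one particular assignment of buyers into the $\SECPRICE$, $\FIXED$, $\STAT$ groups — and show that the resulting deterministic mechanism, after being lifted by MetaMSN, is not IC. So I would first freeze the randomness, reducing the problem to a statement about the deterministic mechanism DNS-SN induced on that fixed grouping.

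The key idea driving the counterexample is that DNS \emph{violates non-sensitivity} (as the paper already notes), and Theorem~\ref{thm:MetaMSN} only guarantees IC-preservation under the non-sensitivity assumption. Concretely, the fixed-price stage of DNS allocates bundles greedily in an \emph{arbitrary buyer order}, and a buyer's utility-maximising bundle $x_i \in \arg\max_{y}\{v_i(y) - p_{\mathrm{fix}}\cdot y\}$ depends on which items remain unsold, which in turn depends on the allocations chosen by earlier buyers. Thus, when a winner $w$ changes her reported valuation, she may alter the items she grabs and hence change \emph{other} buyers' allocations — exactly the sensitivity that MetaMSN's proof forbids. I would exploit this: design a small instance where a buyer $i$, by misreporting her valuation $v_i'$, changes the downstream allocation inside the $\FIXED$ group in a way that — through MetaMSN's iterative priority/exploration loop — changes the potential-winner set $\PWin$ or the residual item vector $x$ in a later iteration, eventually letting $i$ obtain a strictly higher-utility bundle than truthful reporting would give.

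The concrete steps would be: (i) fix a grouping in which the $\SECPRICE$ second-price auction produces no winner (so we reach the fixed-price stage), placing the manipulating buyer $w$ and at least one other buyer in $\FIXED$; (ii) choose valuations and a small residual item set so that under truthful reporting $w$'s greedy bundle $\tilde\pi_w$ is some $x_w$, while under a misreport $v_w'$ she grabs a different bundle that frees up or captures items affecting the subsequent MetaMSN iteration; (iii) arrange the network topology and priority scores $\sigma_i$ (number of reported neighbours) so that the change in $w$'s reported valuation propagates through the exploration loop — either keeping $w$ a winner with a better bundle, or changing which items are residual when $w$ is next considered; (iv) compute $u_w$ under truthful versus misreported valuation and verify $u_w((v_w',r_w),\theta_{-w}) > u_w((v_w,r_w),\theta_{-w})$, contradicting IC. Because IC-U requires truthfulness for \emph{every} deterministic outcome, this single violating draw suffices.

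The main obstacle will be step (iii): engineering the interaction between DNS's internal sensitivity and MetaMSN's graph-exploration/priority mechanics so that the sensitivity genuinely translates into a utility gain rather than being masked by MetaMSN's structure. In particular, I must ensure the misreport does not simply exhaust $w$ (Case~2(b) of the IC proof, where misreporting to a zero allocation only lowers utility), and does not leave the allocation of others unchanged (which would reduce to the non-sensitive case). The delicate part is that MetaMSN recomputes $\tilde{\M}$'s winners on the shrinking set $A\backslash W$ at each iteration, so I need the valuation manipulation to alter the \emph{greedy fixed-price allocation on the relevant subset} in precisely the way that benefits $w$; verifying this may require carefully tracking the residual item vector $x$ and the set $\PWin$ across two consecutive iterations. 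Once the instance is pinned down, the utility comparison itself is a routine finite computation.
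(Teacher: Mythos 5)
Your strategy coincides with the paper's: freeze the DNS randomisation by placing all relevant buyers in the $\FIXED$ group (so the second-price stage is vacuous), and exhibit a small concrete instance in which the sensitivity of the greedy fixed-price allocation lets a buyer profit from a valuation misreport once it interacts with MetaMSN's exploration and priority rules. You also correctly observe that refuting IC-U only requires one deterministic draw to fail IC, and you correctly rule out the two harmless manipulation modes (being exhausted, or leaving others' allocations untouched).

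The gap is that for a counterexample-based negative result, the counterexample \emph{is} the proof, and you have deferred precisely the part you flag as delicate, step (iii). The paper's instance shows that this step is much simpler than your plan suggests: no careful cross-iteration tracking of the residual vector is needed, because the winning manipulation is simply to \emph{over-buy}. Concretely: two items, buyers $a,b$ adjacent to the seller, $c$ a neighbour of $b$ only, fixed price $(2,2)$, with $v_a(1,0)=4$, $v_a(1,1)=5$, $v_b(0,1)=3$, $v_c(1,0)=5$. Truthful $a$ takes only item $1$ in the fixed-price stage (utility $2$ beats $5-4=1$), so both $a$ and $b$ are potential winners; $b$ wins the first iteration on priority (she has the neighbour $c$), then $c$ is explored, precedes $a$ in the next iteration's ordering, and takes the remaining item, leaving $a$ with utility $0$. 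If instead $a$ reports $v_a'(1,1)=7$, she grabs the whole bundle at the fixed price; every other buyer's allocation becomes empty, so $\PWin$ collapses to $\{a\}$ and stays that way even after the exhausted $b$'s neighbour $c$ is explored. Hence $a$ wins both items for \$4, a true utility of $1>0$. The decisive trick, in other words, is that inflating one's valuation of the grand bundle short-circuits both the priority contest and all future competition from newly explored buyers by making the manipulator the sole potential winner in the very first iteration; producing such an instance (or this one) is what your proposal still owes before it is a proof.
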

\begin{proof}
For simplicity, we focus on the fixed-price auction part of the DNS mechanism. Our objective is to show that buyers can manipulate their advantage by misrepresenting their valuations in this segment of the auction. The argument is illustrated through an example.

Consider a social network as depicted in Fig.~\ref{fig:GraphDNS}, where the seller $s$ has two items $u_1, u_2$. The valuations of buyers $a, b, c$ are listed in Fig.~\ref{tab:valuations}. Assume $a, b, c$ are consistently assigned to $\FIXED$ in each iteration of the mechanism. We analyse the outcomes under truthful reporting in the DNS-SN iterations:

\noindent{\bf Iteration 1:} Buyers $a$ and $b$ are included in the explored set $A$. Assume that we arrive at Step 4 (Fixed-Price Auction) with buyers $a, b$, the price $p_{\text{fix}} = (2, 2)$,  buyer $a$ preceding $b$ in the ordering. As bundles $(1,0)$ and $(0,1)$ maximise the utilities for $a$ and $b$ resp, DNS allocates $u_1$ to $a$ for a price of \$2, and \( u_2 \) to \( b \) for \$2. The potential winner set is updated to \( \PWin = \{a, b\} \). Given \( b \)'s higher priority, \( b \) wins this iteration.

\noindent{\bf Iteration 2:} Assume we arrive at Step 4 with buyers $a,c$, $p_{\text{fix}}=(2, 2)$, and $c$ preceding $a$ in the ordering. DNS allocates $u_2$ to $c$ for \$2, making $c$ the winner of this iteration.
In this truthful scenario, buyer $a$ does not win any item, yielding a utility $0$.

However, the scenario changes if $a$ misrepresents her valuation. Consider a rerun of DNS-SN with $a$ reporting a valuation of $v_a(1,1) = 7$. In the first iteration, $b$'s neighbour $c$ is added to $A$. With the same fixed price as before  and $a$ preceding $c$ in the order, DNS allocates the bundle $(1,1)$ to $a$, as it maximises $a$'s utility under the falsified valuation. As the sole potential winner, $a$ wins the bundle $(1,1)$, leaving $b$ and $c$ with no items. In this misrepresentation scenario, $a$'s utility increases to \$1, higher than that obtained through truthful reporting. This example shows that the DNS-SN mechanism fails to uphold the IC-U property when subjected to strategic misreporting in the fixed-price auction. \end{proof}

\begin{figure}
\begin{minipage}{.23\textwidth}\centering
\includegraphics[width=0.55\columnwidth]{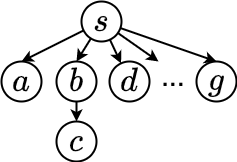}
\caption{A social network}
\vspace{0.5cm}
\label{fig:GraphDNS}
\end{minipage}
\begin{minipage}{.23\textwidth}\phantom{\rule{1cm}{0.2cm}}
\footnotesize
    \begin{tabular}{|c|ccc|}
    \hline
          &  $(1, 0)$&  $(0,1)$&  $(1,1)$\\ \hline
         $a$&  \$4&  \$0&  \$5\\
         $b$&  \$0&  \$3&  \$3\\
        $c$& \$5& \$0 & \$5\\
 \hline
    \end{tabular}
    \caption{Valuations of $a,b,c$}
    \vspace{0.5cm}
    \label{tab:valuations}
\end{minipage}

\end{figure}

\vspace{-0.4cm}

\subsection{MetaMSN-m}
In this section, we introduce {\em MetaMSN-m}, a modified version of our MetaMSN. The primary objective of MetaMSN-m is to retain incentive compatibility across transformations from classical IC mechanisms to their counterparts in network settings, without necessitating additional assumptions. 

The central innovation of MetaMSN-m lies in its approach to winner selection. Unlike its predecessor, MetaMSN, which restricts the selection to at most one winner per iteration, MetaMSN-m adopts a strategy aimed at maximising buyer demand in each iteration. It achieves this by utilising the underlying classical mechanism to potentially accommodate multiple buyers simultaneously.

The specifics of MetaMSN-m are encapsulated in Algorithm~\ref{alg:MetaMSN-m}. The key distinction between this algorithm and its predecessor (Algorithm~\ref{alg:MetaMSN}) is manifest in Line~\ref{ln:pi&p_m}. Here, MetaMSN-m diverges by determining allocations for all buyers in set $A\backslash W$ in a given iteration, based directly on the allocation rules stipulated by the classical mechanism $\tilde{\M}$. 


\begin{algorithm}[h] 
\caption{The MetaMSN-m Algorithm} 
\label{alg:MetaMSN-m} 
\begin{algorithmic}[1]
\footnotesize
\Require Global profile $\theta'$ and classical mechanism $\tilde{\M}=(\tilde{\pi},\tilde{p})$
\Ensure Allocation result $\pi(\theta')$ and payment result $p(\theta')$ 
\State Initialise $x \gets x^0, A \gets \{s\},$  $W\gets \emptyset$ 
\State Initialise $\PWin$ as the winner set obtained by applying $\tilde{\M}$ on $\theta_{r_s}'$
\While{$\PWin\backslash W \neq \emptyset$ and $x \neq \vect{0}$}
    \While{$A$ contains an unmarked  $i\in W \cup A\backslash \PWin$} \label{ln:exploration_start_m}
         \State Update $A\gets A\cup r_i'$, mark $i$
         \State Update $\PWin$ as the winners of $\tilde{\M}$ over $A\backslash W$
    \EndWhile \label{ln:exploration_end_m}
    \State Set $\pi_i(\theta'_{A \backslash W}) \gets \tilde{\pi}_i(v'_{A \backslash W})$ and $p_i(\theta'_{ A \backslash W}) \gets \tilde{p}_{i}(v'_{A \backslash W})$ for all $i \in A\backslash W$ \label{ln:pi&p_m} 
    \State Update $W\gets W \cup \PWin$ and $x \gets x^0 -\sum_{i\in B}x_i$ \label{ln:update_m} 
\EndWhile
\end{algorithmic}
\end{algorithm}

The next theorem demonstrates that MetaMSN-m inherits the IC, IR and ND properties from classical mechanisms. It can be proved by similar arguments for Thm.~\ref{thm:MetaMSN}. 

\begin{theorem}\label{thm:MetaMSN-m}
Suppose that a classical mechanism $\tilde{\M}$ is IC, IR \& ND. Then the corresponding mechanism $\M$ obtained by applying MetaMSN-m is also IC, IR \& ND. 
\end{theorem}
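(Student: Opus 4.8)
The plan is to follow the four-part structure of the proof of Theorem~\ref{thm:MetaMSN} --- treating IR, ND, and the valuation and neighbourhood halves of IC in turn --- while exploiting the single structural change in MetaMSN-m: in each iteration \emph{all} potential winners are declared winners at once (Line~\ref{ln:pi&p_m}). I would first isolate the two observations that drive everything. First, by the exhausted-stays-exhausted invariant a buyer $i$ who eventually wins is never a lingering exhausted buyer; hence she enters $A\backslash W$ only in the single iteration $t$ in which she wins, so the whole exploration over iterations $1,\dots,t-1$ (and thus $A^{(t-1)}$, $W^{(t-1)}$) is independent of $i$'s reported valuation. Second, since potential winners are never marked inside the inner exploration loop, a winner's reported neighbours are appended to $A$ only \emph{after} she has won; consequently her own neighbourhood report never appears in the competitor set $A\backslash W$ against which her allocation and payment are computed.

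For IR and ND the arguments are verbatim adaptations. An exhausted buyer gets utility $0$; a winner $i$ receives exactly $\tilde{\pi}_i$ and $\tilde{p}_i$ over the set $A\backslash W$ at her winning iteration, so $u_i=\tilde{u}_i\ge 0$ by IR of $\tilde{\M}$. For ND I would decompose the revenue of $\M$ as the sum, over iterations, of the revenue of $\tilde{\M}$ on the respective sets $A\backslash W$; each buyer pays exactly once (at her winning iteration), the iteration winner sets $\PWin$ are disjoint, and each summand is nonnegative by ND of $\tilde{\M}$.

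For IC the decisive point is that the problematic ``potential-winner-that-is-not-selected'' situation --- Case~2 in the proof of Theorem~\ref{thm:MetaMSN}, the only case that invoked non-sensitivity --- cannot arise here, because every potential winner wins immediately. Fixing the competitor profiles $\theta_{-i}$ in the iteration where $i$ is resolved, a buyer then falls into just two cases. If $i\in\PWin$ under truthful $v_i$ she wins with utility $\tilde{u}_i(v_i,\theta_{-i})$, which IC of $\tilde{\M}$ shows dominates $\tilde{u}_i(v_i',\theta_{-i})$ (the value of any valuation deviation keeping her a winner), while any deviation that exhausts her gives utility $0\le\tilde{u}_i(v_i,\theta_{-i})$. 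If $i\notin\PWin$ under $v_i$ her utility is $0$, and IC of $\tilde{\M}$ yields $\tilde{u}_i(v_i',\theta_{-i})\le\tilde{u}_i(v_i,\theta_{-i})=0$, so no misreport makes winning profitable. For the neighbourhood, the second observation makes the claim essentially immediate: $i$'s reported neighbours never enter her own competitor set, so her allocation, payment, and indeed her win/exhaust status are independent of $r_i'$; reporting $r_i'\subsetneq r_i$ leaves her utility unchanged and truthful reporting is weakly dominant.

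The main obstacle is to make the ``fix $\theta_{-i}$'' step of the valuation argument fully rigorous, i.e.\ to show that $i$'s valuation report cannot reshape her \emph{own} competitor set $A\backslash W$ to her advantage. The first observation pins the exploration down through iteration $t-1$, but within iteration $t$ a strong misreport could in principle flip another buyer from potential winner to exhausted, triggering further exploration and enlarging the competitor set. I would argue that such enlargement can only \emph{add} competitors and therefore cannot help $i$, formalising this through the exhausted-stays-exhausted invariant together with the fact that a winner is resolved at the first iteration she becomes a potential winner. This monotone ``more competition is never better'' step is the delicate part, and it is precisely what takes over the role that non-sensitivity played in Theorem~\ref{thm:MetaMSN}.
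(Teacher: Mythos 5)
Your proposal follows essentially the same route as the paper's proof. The paper handles IR and ND exactly as you do, by carrying over the arguments of Theorem~\ref{thm:MetaMSN}, and its entire valuation-IC argument is the single observation you make: because Line~\ref{ln:pi&p_m} of Algorithm~\ref{alg:MetaMSN-m} gives \emph{every} buyer in $A\backslash W$ exactly the allocation and payment that $\tilde{\M}$ would assign over that set, IC of $\tilde{\M}$ transfers directly, and the ``potential winner not selected'' case that forced the non-sensitivity assumption in Theorem~\ref{thm:MetaMSN} never arises. For the neighbourhood half the paper just says ``similar arguments to Theorem~\ref{thm:MetaMSN}''; your Observation~2 (a buyer's reported neighbours are added to $A$ only after she is resolved, hence never sit in her own competitor set) is the correct, and in fact sharper, adaptation, since MetaMSN-m has no priority scores for the Theorem~\ref{thm:MetaMSN} argument to act on.

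One remark on your ``delicate part''. The paper's proof does not engage with it at all: it writes $\theta_{-i}$ for the profiles of the other buyers in $A\backslash W$ and invokes IC of $\tilde{\M}$ as though that set were independent of $i$'s report. You are right that it need not be --- inside the resolving iteration, $i$'s valuation influences $\PWin$, hence which buyers are declared exhausted, hence how far the exploration loop runs. But note that your proposed patch covers only one direction: a misreport can also \emph{shrink} the competitor set (for instance, under-reporting can keep rivals inside $\PWin$, so their neighbours are never added), and for that direction ``more competition is never better'' says nothing. Moreover, that monotonicity --- a buyer's truthful utility under $\tilde{\M}$ weakly decreases as the participant set grows --- is not a consequence of IC, IR and ND alone; it would have to be verified for the particular $\tilde{\M}$ (e.g.\ it holds for second-price-style mechanisms but is not automatic for arbitrary IC mechanisms, such as ones with redistribution-style payments). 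So as a fully rigorous argument your sketch is incomplete at exactly the point you flag; this is, however, a gap the paper's own proof shares silently, and apart from it your write-up is at least as complete as the published one.
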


\begin{proof}
For {\bf IR} and {\bf ND}, we prove the properties of $\M$ using similar arguments for IR and ND in Theorem~\ref{thm:MetaMSN}. 

For {\bf IC}, we first show that {\em no buyer can benefit from misreporting her valuation}. During an iteration of $\M$, consider a buyer $i\in A\backslash W$. Write $\theta_{-i}$ for the profiles of buyers in $A\backslash W$ except $i$. Suppose $i$ reports a profile $\theta_i=(v_i,r_i')$. 
Line~\ref{ln:pi&p_m} of Algorithm~\ref{alg:MetaMSN-m} sets $i$'s payment and allocation as $\pi_{i}((v_i,r_i'), \theta_{-i})= \tilde{\pi}_{i}(v_i, \theta_{-i})$ and $p_{i}((v_i,r_i'), \theta_{-i})= \tilde{p}_{i}(v_i, \theta_{-i})$, resp. By IC  of $\tilde{\M}$,  $u_i((v_i,r_i'), \theta_{-i}) = \tilde{u}_i(v_i,\theta_{-i}) \geq \tilde{u}_i(v_i',\theta_{-i}) = u_i((v_i',r_i), \theta_{-i})$, for any $\theta_{i}, \theta_{i}'=(v_i',r_i')$, and $\theta_{-i}$.  

Next we can show that {\em no buyer can benefit from misreporting her neighbourhood} using similar arguments in Thm.~\ref{thm:MetaMSN}.  
\end{proof}

As the DNS mechanism is IC, IR and ND, the next theorem then follows directly from Theorem~\ref{thm:MetaMSN-m}. 

\begin{theorem}
The mechanism over social networks derived from applying the MetaMSN-m mechanism to the DNS mechanism satisfies IC, IR and ND in the universal sense.  \qed
\end{theorem}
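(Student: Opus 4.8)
The plan is to reduce the statement to Theorem~\ref{thm:MetaMSN-m} by ``derandomising'' the DNS mechanism. Recall that a randomised mechanism satisfies a property $Q$ in the universal sense precisely when it is a probability distribution over deterministic mechanisms each of which satisfies $Q$. Since the DNS mechanism is IC-U, IR-U and ND-U, every fixing $\omega$ of its internal coins---namely the random partition of buyers into $\SECPRICE$, $\FIXED$ and $\STAT$, together with the arbitrary orderings used in the fixed-price stage---yields a deterministic classical mechanism $\tilde{\M}_\omega = (\tilde{\pi}_\omega, \tilde{p}_\omega)$ that is simultaneously IC, IR and ND.

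First I would make precise what it means to apply MetaMSN-m to the randomised DNS mechanism. The idea is to couple all the randomness consumed during a run of Algorithm~\ref{alg:MetaMSN-m} into a single draw $\omega$ that is fixed once at the outset and then reused in every invocation of $\tilde{\M}$ across all iterations, including the initialisation of $\PWin$ on $\theta'_{r_s}$ and every recomputation of $\PWin$ on $A\backslash W$. Under this coupling, a run of MetaMSN-m applied to DNS with coins $\omega$ is literally a run of MetaMSN-m applied to the single deterministic mechanism $\tilde{\M}_\omega$. The network mechanism $\M$ derived from DNS is then the distribution over the deterministic network mechanisms $\M_\omega$, where each $\M_\omega$ is obtained by applying MetaMSN-m to $\tilde{\M}_\omega$.

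Next I would invoke Theorem~\ref{thm:MetaMSN-m}. For each fixed $\omega$, the classical mechanism $\tilde{\M}_\omega$ is deterministic and IC, IR, ND, so Theorem~\ref{thm:MetaMSN-m} applies verbatim and certifies that $\M_\omega$ is IC, IR and ND. Since $\omega$ was an arbitrary realisation of the coins, every deterministic mechanism in the support of $\M$ enjoys these three properties, which is exactly the assertion that $\M$ is IC-U, IR-U and ND-U.

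The main obstacle I anticipate is the coupling step rather than the application of Theorem~\ref{thm:MetaMSN-m}. Because MetaMSN-m calls the underlying classical mechanism repeatedly on the growing sets $A\backslash W$, one must justify that sampling DNS's group assignment once and reusing it---so that every per-iteration call is consistent with the same $\tilde{\M}_\omega$---is the correct reading of ``applying MetaMSN-m to DNS.'' I would address this by appealing to the per-realisation nature of the universal-sense guarantees: the IC argument in the proof of Theorem~\ref{thm:MetaMSN-m} compares a buyer's truthful and deviating reports only under a common fixing of the classical mechanism, and the universal-sense property of DNS ensures this comparison favours truthfulness for \emph{every} such fixing. Hence conditioning on the realised coins reduces each run to the deterministic case already covered by Theorem~\ref{thm:MetaMSN-m}, and averaging back over the coins transfers IC, IR and ND into their universal-sense counterparts for $\M$.
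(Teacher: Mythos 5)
Your proposal is correct and matches the paper's own reasoning: the paper simply observes that DNS is IC, IR and ND (in the universal sense) and concludes the result ``directly from Theorem~\ref{thm:MetaMSN-m}'', which is precisely your per-realisation reduction. Your explicit treatment of the coin-fixing/coupling step is a more careful spelling-out of the same argument, not a different route.
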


The following theorem shows the lower bound of social welfare of MetaMSN-m; See the proof in Appendix~\ref{app:meta-MSN-m}. 

\begin{theorem}\label{thm:MetaMSN-m_sw}
Given a classical mechanism $\tilde{\M}$ that maximises social welfare, the social welfare of $\M$ obtained from $\tilde{\M}$ by applying MetaMSN-m is no less than the social welfare of $\tilde{\M}$ over the seller's neighbourhood. \qed
\end{theorem}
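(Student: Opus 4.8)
The plan is to mirror the telescoping argument of Theorem~\ref{thm:MetaMSN_sw}, adapting it to the one structural difference in MetaMSN-m: on Line~\ref{ln:pi&p_m} an \emph{entire} winner set is committed per iteration rather than a single winner. First I would fix the iteration-indexed bookkeeping. Let $\PWin^{(t)}$ be the set of winners selected in the $t$-th iteration, i.e. the winners returned by $\tilde{\M}$ over $A^{(t)}\backslash W^{(t-1)}$, so that $W^{(t)}=W^{(t-1)}\cup\PWin^{(t)}$, $A^{(t)}\supseteq A^{(t-1)}$, and $x^{(t)}=x^{(t-1)}-\sum_{w\in\PWin^{(t)}}x_w$. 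As before, write $\SW_{\tilde{\M}}^{(t)}(\theta'_S)$ for the social welfare of running $\tilde{\M}$ over buyer set $S$ with residual items $x^{(t)}$, and let $\theta^{(t)}_i$ be the gained utility (valuation) of buyer $i$ in that run. The only substantive change from the earlier proof is that the per-iteration gained utility becomes the sum $\sum_{w\in\PWin^{(t)}}\theta^{(t)}_w$ rather than a single $\theta^{(t)}_{w^{(t)}}$.

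Next I would establish the two ingredients. (i) A per-iteration decomposition: since $\tilde{\M}$ maximises social welfare and $\PWin^{(t)}$ is exactly its winner set over $A^{(t)}\backslash W^{(t-1)}$ with items $x^{(t-1)}$, optimality gives $\SW_{\tilde{\M}}^{(t-1)}(\theta'_{A^{(t)}\backslash W^{(t-1)}})=\sum_{w\in\PWin^{(t)}}\theta^{(t)}_w+\SW_{\tilde{\M}}^{(t)}(\theta'_{A^{(t)}\backslash W^{(t)}})$, because the restriction of an optimal allocation to the complementary buyers and leftover items is itself optimal. (ii) Monotonicity in the buyer set: because $A^{(t+1)}\supseteq A^{(t)}$ and the optimal allocation over the smaller set is feasible over the larger one (extra buyers receive $\vect{0}$, contributing $0$ by normalisation), we have $\SW_{\tilde{\M}}^{(t)}(\theta'_{A^{(t+1)}\backslash W^{(t)}})\geq\SW_{\tilde{\M}}^{(t)}(\theta'_{A^{(t)}\backslash W^{(t)}})$. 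Combining (i) and (ii) exactly as in Equation~\eqref{eqn:induction} shows that the potential $\SW_{\tilde{\M}}^{(t)}(\theta'_{A^{(t+1)}\backslash W^{(t)}})+\sum_{1\leq l\leq t}\sum_{w\in\PWin^{(l)}}\theta^{(l)}_w$ is non-decreasing in $t$.

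Finally I would chain the bound. Because MetaMSN-m commits all winners each round, the valuations it collects in iteration $t$ exhaust the full $\tilde{\M}$-welfare over the explored non-winners; in particular, for the terminating iteration $T$, $\SW_\M(\theta')=\SW_{\tilde{\M}}^{(T-1)}(\theta'_{A^{(T)}\backslash W^{(T-1)}})+\sum_{1\leq l\leq T-1}\sum_{w\in\PWin^{(l)}}\theta^{(l)}_w$. Applying the monotone-potential inequality down to $t=0$ yields $\SW_\M(\theta')\geq\SW_{\tilde{\M}}^{(0)}(\theta'_{A^{(1)}})=\SW_{\tilde{\M}}(A^{(1)},x^0)$, and since the first inner exploration marks $s$ and inserts $r_s$ into $A$, we have $A^{(1)}\supseteq r_s$, so monotonicity gives $\SW_{\tilde{\M}}(A^{(1)},x^0)\geq\SW_{\tilde{\M}}(\theta'_{r_s})$, as required.

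The main obstacle is justifying the decomposition in (i) when a whole winner set is removed at once: I must confirm that peeling off all of $\PWin^{(t)}$ together with their items is welfare-equivalent to the one-at-a-time removal of the MetaMSN proof, which reduces to the observation that an optimal allocation remains optimal under restriction to the complementary buyers and residual items. I would also check the initialization to guarantee $A^{(1)}\supseteq r_s$. I note, however, that the telescope is optional here: since all valuations are non-negative and the first iteration already commits the optimal winner set over $A^{(1)}\supseteq r_s$, one obtains directly $\SW_\M(\theta')\geq\sum_{w\in\PWin^{(1)}}\theta^{(1)}_w=\SW_{\tilde{\M}}(A^{(1)},x^0)\geq\SW_{\tilde{\M}}(\theta'_{r_s})$, so the committing-all-winners feature makes the bound essentially immediate.
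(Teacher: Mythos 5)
Your proposal is correct, and your closing remark is precisely the paper's own proof: the paper argues that after the first round of exploration $A^{(1)}\supseteq r_s$, that Line~\ref{ln:pi&p_m} of Algorithm~\ref{alg:MetaMSN-m} makes the first-iteration allocation of $\M$ coincide with $\tilde{\M}$'s optimal allocation over $A^{(1)}$, so $\SW_{\M}(\theta'_{A^{(1)}})=\SW_{\tilde{\M}}(\theta'_{A^{(1)}})\geq \SW_{\tilde{\M}}(\theta'_{r_s})$, and later iterations can only add non-negative valuations. Your primary route, porting the telescoping potential of Theorem~\ref{thm:MetaMSN_sw} with per-iteration winner sets, is also sound but heavier than needed, and in fact it degenerates for MetaMSN-m: since Line~\ref{ln:pi&p_m} commits the \emph{entire} optimal winner set $\PWin^{(t)}$ of $\tilde{\M}$ over $A^{(t)}\backslash W^{(t-1)}$ with the residual items, the leftover term $\SW_{\tilde{\M}}^{(t)}\left(\theta'_{A^{(t)}\backslash W^{(t)}}\right)$ in your decomposition (i) is identically zero --- any positive-value allocation of the residual items to the explored non-winners could be adjoined to the committed allocation, contradicting optimality of $\tilde{\M}$. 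Hence your telescope carries no information beyond its first step, which is exactly the direct bound; the telescoping machinery is genuinely needed only for MetaMSN itself (Theorem~\ref{thm:MetaMSN_sw}), where a single winner is committed per iteration and real residual welfare remains among the explored non-winners.
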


\paragraph{Limitation of MetaMSN-m.} While MetaMSN-m exhibits certain advantages over its predecessor MetaMSN, it is important to acknowledge a notable limitation in terms of social welfare optimisation. The primary issue with MetaMSN-m lies in its allocation strategy, which tends to allocate items as soon as there is a demand within an iteration. To illustrate, consider a scenario where the aggregate demand from buyers within the seller's immediate neighbours $r_s$ surpasses the total supply of items. In such a case, the MetaMSN-m mechanism  may allocate all items exclusively to these neighboring buyers in $r_s$ without propagating information to further levels. This allocation method precludes the mechanism from exploring and leveraging potential higher valuations that could arise from a broader competitive environment, which may result in a lower social welfare. Nevertheless, we will demonstrate in the next section using empirical validations that the resulting mechanism over social network may still provide sufficiently high social welfare.

\section{Experiment}

We conduct empirical studies to verify the performance of our meta-mechanisms. All the experiments are performed in Python on macOS Monterey system with Apple M2 Pro CPU and 16GB RAM. 
The code is available on \url{https://anonymous.4open.science/r/msn-878B/}. 
Our experiments serve to answer the following questions: 
{\bf Q1.} Social welfare and revenue are two key performance metrics of a mechanism. How do our mechanisms over social networks perform in terms of these two criteria, when compared with the {\em optimal case}, i.e., when the corresponding classical mechanism is applied to the {\em entire population} in the social network without the need for information propagation? 
{\bf Q2.} Information propagation through the social network attracts new buyers to the auction. What is the effect of information propagation in terms of the number of  new buyers joining the auction, and how much social welfare and revenue would grow, when compared with the case where only buyers in $r_s$ join the auction? 
{\bf Q3.}  MetaMSN-m tends to produce a mechanism over social network that has an inferior social welfare than MetaMSN, when applied to the same classical mechanism. How inferior MetaMSN-m will get compared to MetaMSN? 
For each questions above, we focus on three important and well-explored combinatorial auction scenarios: (i) the case with single-minded buyers (for Q1-Q3) where both MetaMSN and MetaMSN-m are IC, and two other cases where MetaMSN-m is IC (for Q1, Q2), i.e., (ii) the case with {\em sub-modular buyers} ($\forall x, y \in X, v_i(x\cup y) + v_i(x\cap y) \leq v_i(x) + v_i(y)$), and (iii) the case with {\em sub-additive buyers} ($\forall x, y \in X, v_i(x+y) \leq v_i(x) + v_i(y)$). 

\paragraph{Dataset.} We use three real-world datasets as the social network of a seller and buyers, including
Facebook social network \cite{leskovec2012learning},
Hamsterster friendships \cite{Kunegis2013konect}, 
and email-Eu-core network \cite{Yin2017Local}. 
The key statistics of the datasets are listed in Tab. \ref{tab:dataset} of App.~\ref{app:exp}. 
We randomly select one vertex as the seller and treat the others as the buyers. 
As the initial setup, especially the neighbour set of the seller, may effect experiment results, we repeat each scenario $|V|/20$ times and calculate the revenue and social welfare as the result for the scenario. 

\paragraph{Valuation.} We generate four sets of synthetic valuations. 
(i) For combinatorial auction with single-minded buyers, we generate a random vector in $ \{0,1\}^m$ as the demanded bundle and a random scalar as the average valuation per item for each buyer. The average valuation of each buyer $i \in B$ is sampled from (a) $\mathcal{U}(0,200000)$ and (b) $\mathcal{N}(100000,4000)$.
(ii) For combinatorial auction with monotone valuation buyers, we generate $2^m-1$ random scalar numbers for each buyer, each of which is the valuation of a bundle $x\in \{0,1\}^m$. Specifically, we use {\em coverage function} \cite{chakrabarty2015recognizing} and {\em square root function} to generate sub-modular and sub-additive valuations, resp. 
(iii) For sub-modular valuations, we define a finite set $S=\{1,2,\ldots, 400000\}$. Given the set $X$ of all possible bundles and the finite set $S$, associate each bundle $x\in X$ with a subset $s_x \subset S$. In particular, each bundle $x$ of a single item is associated with a random subset $s_x \subset S$, whose size is a random number drawn from $\mathcal{U}(1, 200000)$. For each bundle $y\in X$, the associated subset is $s_y = \cup_{i \in y} s_i $. Then the valuation of bundle $x$ is defined as $v(x)=|s_x|$. 

\paragraph{Benchmark.} 
We compare our meta-mechanisms with (a) the classical mechanism applied to a social network with $s$ connecting with all buyers, denoted by ALL, and (b) the classical mechanism applied to the seller's neighbourhood, denoted by FIRST. These two cases give us the upper and lower bounds of $\SW$ and of $\RV$ that any mechanism extending classical mechanism to social networks could achieve, respectively. 

\begin{figure}[t]
    \centering
    \includegraphics[width = 0.8\columnwidth]{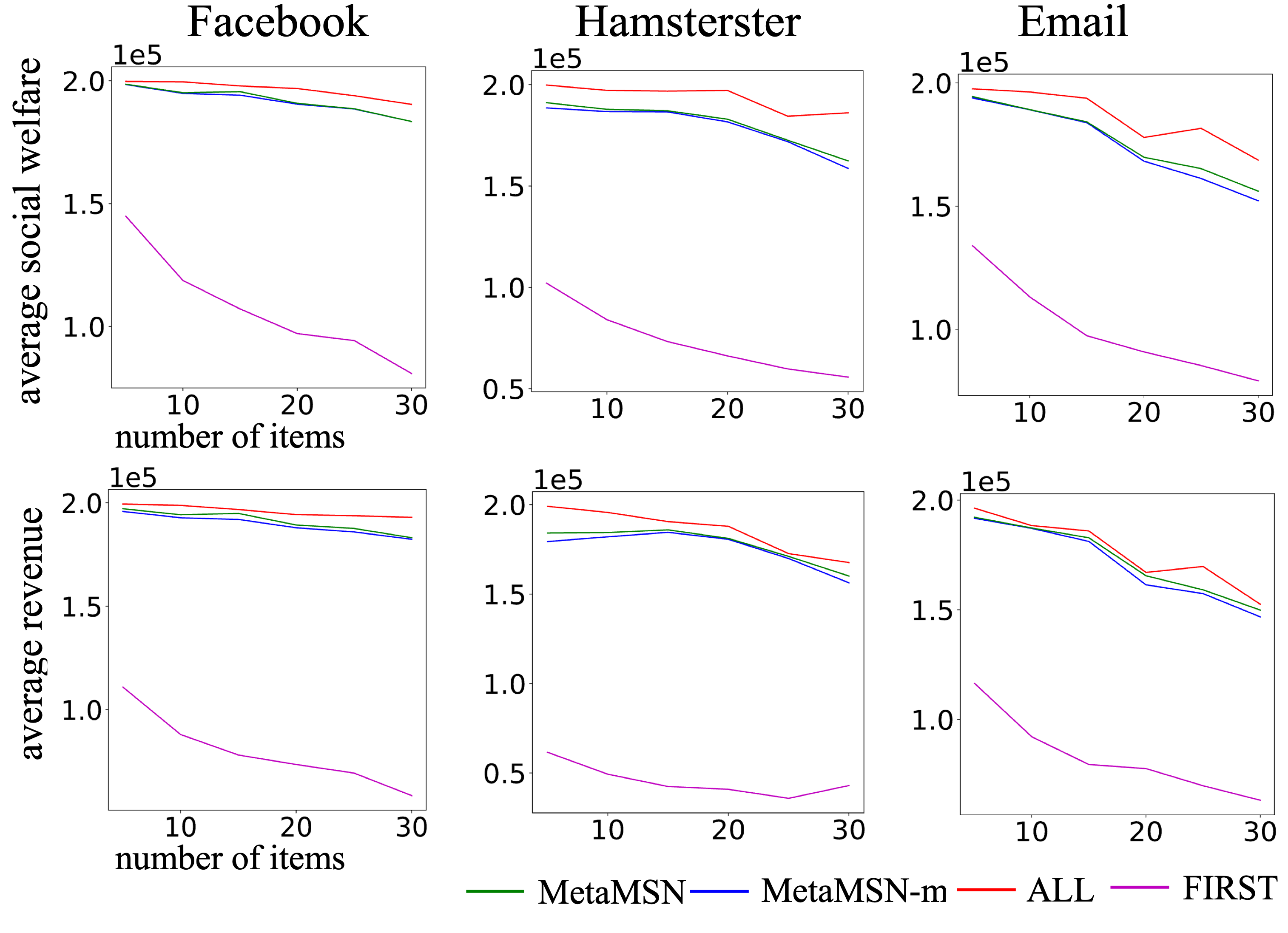}
    \caption{Social welfare and revenue of four mechanisms in three datasets for (i) combinatorial auction with single-minded buyers}
    \label{fig:singleminded}
    \vspace{0.5cm}
\end{figure}

\paragraph{Results.} Figure~\ref{fig:singleminded} shows the social welfare $\SW$ and the revenue $\RV$ per item with the increase of the number of items $m$ for Case (i) with single-minded buyers under uniformly distributed valuations. 
Specifically, for Q1, ALL obtains higher social welfare and revenue than MetaMSN and MetaMSN-m, under varying $m$. Nevertheless, the lines of MetaMSN are very close to those of ALL, losing by at most $3\%$ ($5\%$),  $9\%$ ($10\%$),  and $8\%$ ($7\%$) of $\SW$ ($\RV$) for Facebook, Hamsterster, and Email networks, resp., which shows that MetaMSN achieves near-optimal social welfare and revenue. 
For Q2, both MetaMSN and MetaMSN-m obtain better social welfare and revenue than FIRST. MetaMSN grows by at least $37\%$ ($79\%$),  $87\%$ ($201\%$),  and $46\%$ ($66\%$) of $\SW$ ($\RV$) while MetaMS-m grows by at least $36\%$ ($78\%$),  $85\%$ ($199\%$),  and $45\%$ ($63\%$) of $\SW$ ($\RV$) for the three datasets, resp. The growth verifies that effectiveness of our meta-mechanisms in information prorogation. For Q3, MetaMSN obtains better $\SW$ and $\RV$ than MetaMSN-m. Nevertheless, the gap is very small. That is because when selling a small bundle over a large network, MetaMSN-m actually can explore a lot through exhausting buyers in each iteration due to the high competition. The results under normally distributed valuations show similar patterns; See Figure~\ref{fig:singleminded-gause} in App.~\ref{app:exp}.

Table~\ref{tab:result_submodular} presents the $\SW$ and $\RV$ for Case (ii) with sub-modular buyers. The results for Case (ii) are similar to those for Case (i), i.e., ALL performs better than MetaMSN-m, followed by FIRST.  
For Q1, MetaMSN-m returns inferior social welfare and social welfare than ALL under varying $m$, losing by at most $7\%$ ($17\%$) of $\SW$ ($\RV$) across the three datasets. For Q2, MetaMS-m grows by at least $6\%$ ($6\%$) of $\SW$ ($\RV$) across the three datasets. 

\begin{table}[t]
    \centering
    \caption{Social welfare and revenue of three mechanisms in three datasets for (ii) combinatorial auction with sub-modular buyers}
    \vspace{0.5cm}
    \label{tab:result_submodular}
    \resizebox{\columnwidth}{!}{
    {\begin{tabular}{|c|c|c|c|c|c|c|c|}
        \hline
        \multirow{2}{*}{Dataset} & \multirow{2}{*}{$m$} & \multicolumn{3}{|c|}{Social welfare} & \multicolumn{3}{|c|}{Revenue} \\ \cline{3-8}
         &  & MetaMSN-m &{ALL} &FIRST & MetaMSN-m &{ALL} &FIRST\\\hline
        \multirow{3}{*}{Facebook} & 3 &30469 &31059 &30089 &21315 &23334 &19731\\ \cline{2-8}
         &4 &33222 &35633 &31432	&26795 &29195 &22422\\ \cline{2-8}
         & 5 &31629 &32061 &31531 &18516 &18833 &17470\\ \hline \hline
        \multirow{3}{*}{Hamsterter}  & 3 &31081 &33453 &24553 &26108 &31439 &13122\\ \cline{2-8}
         & 4 &33728 &36438	&27148	&28911	&33484 &13376\\ \cline{2-8}
         & 5 &33093 &35738 &27951 &27218 &32297 &14133\\ \hline\hline
         \multirow{3}{*}{email} & 3 &33683 &35195	&27957	&31609	&34575	&19743\\ \cline{2-8}
         & 4 &34008 &35249	&28526 &28089	&31416	&18256\\ \cline{2-8}
         & 5 &34007 &36523	&28916 &28462 &33186 &15930\\ \hline
    \end{tabular}}
    }
\end{table}


\paragraph{Limitation.} While the result demonstrates our meta-mechanisms' strengths in adapting auctions for social networks, we acknowledge certain limitations to the experiments that offer avenues for future research. Specifically, our experiments focus on sub-modular and sub-additive valuation buyers, are conducted on relatively small networks, and do not thoroughly explore the effects of the parameter $\epsilon$ in the DNS mechanism. Despite these limitations, the findings show near-optimal social welfare and revenue performance and provide a foundation for further refinement of auctions in social networks. For more detailed discussions, see App.~\ref{app:limit}.

\section{Conclusion}
In this paper, we design meta-mechanisms that provide a uniform way of transforming mechanisms from classical models to mechanisms over networks while preserving desirable properties. Our meta-mechanisms also provide the first solution to combinatorial auction over networks.



\begin{ack}
This work is partially supported by National Natural Science Foundation of China No. 62172077 and Research Fund for International Senior Scientists No. 62350710215. 
\end{ack}



\bibliography{main}

\begin{thebibliography}{31}
\providecommand{\natexlab}[1]{#1}
\providecommand{\url}[1]{\texttt{#1}}
\expandafter\ifx\csname urlstyle\endcsname\relax
  \providecommand{\doi}[1]{doi: #1}\else
  \providecommand{\doi}{doi: \begingroup \urlstyle{rm}\Url}\fi

\bibitem[Archer et~al.(2004)Archer, Papadimitriou, Talwar, and
  Tardos]{archer2004approximate}
A.~Archer, C.~Papadimitriou, K.~Talwar, and {\'E}.~Tardos.
\newblock An approximate truthful mechanism for combinatorial auctions with
  single parameter agents.
\newblock \emph{Internet Mathematics}, 1\penalty0 (2):\penalty0 129--150, 2004.

\bibitem[Assadi and Singla(2019)]{assadi2019improved}
S.~Assadi and S.~Singla.
\newblock Improved truthful mechanisms for combinatorial auctions with
  submodular bidders.
\newblock In \emph{60th IEEE Annual Symposium on Foundations of Computer
  Science, FOCS 2019}, pages 233--248. IEEE Computer Society, 2019.

\bibitem[Assadi and Singla(2020)]{assadi2020improved}
S.~Assadi and S.~Singla.
\newblock Improved truthful mechanisms for combinatorial auctions with
  submodular bidders.
\newblock \emph{ACM SIGecom Exchanges}, 18\penalty0 (1):\penalty0 19--27, 2020.

\bibitem[Assadi et~al.(2021)Assadi, Kesselheim, and Singla]{assadi2021improved}
S.~Assadi, T.~Kesselheim, and S.~Singla.
\newblock Improved truthful mechanisms for subadditive combinatorial auctions:
  Breaking the logarithmic barrier.
\newblock In \emph{Proceedings of the 2021 ACM-SIAM Symposium on Discrete
  Algorithms (SODA)}, pages 653--661. SIAM, 2021.

\bibitem[Chakrabarty and Huang(2015)]{chakrabarty2015recognizing}
D.~Chakrabarty and Z.~Huang.
\newblock Recognizing coverage functions.
\newblock \emph{SIAM Journal on Discrete Mathematics}, 29\penalty0
  (3):\penalty0 1585--1599, 2015.

\bibitem[Dobzinski(2007)]{dobzinski2007two}
S.~Dobzinski.
\newblock Two randomized mechanisms for combinatorial auctions.
\newblock In \emph{International Workshop on Approximation Algorithms for
  Combinatorial Optimization}, pages 89--103. Springer, 2007.

\bibitem[Dobzinski(2016)]{dobzinski2016breaking}
S.~Dobzinski.
\newblock Breaking the logarithmic barrier for truthful combinatorial auctions
  with submodular bidders.
\newblock In \emph{Proceedings of the forty-eighth annual ACM symposium on
  Theory of Computing}, pages 940--948, 2016.

\bibitem[Dobzinski et~al.(2005)Dobzinski, Nisan, and
  Schapira]{dobzinski2005approximation}
S.~Dobzinski, N.~Nisan, and M.~Schapira.
\newblock Approximation algorithms for combinatorial auctions with
  complement-free bidders.
\newblock In \emph{Proceedings of the thirty-seventh annual ACM symposium on
  Theory of computing}, pages 610--618, 2005.

\bibitem[Dobzinski et~al.(2006)Dobzinski, Nisan, and
  Schapira]{dobzinski2006truthful}
S.~Dobzinski, N.~Nisan, and M.~Schapira.
\newblock Truthful randomized mechanisms for combinatorial auctions.
\newblock In \emph{Proceedings of the thirty-eighth annual ACM symposium on
  Theory of computing}, pages 644--652, 2006.

\bibitem[Fang et~al.(2023)Fang, Zhang, Liu, Khoussainov, and
  Xiao]{fang2023multi}
Y.~Fang, M.~Zhang, J.~Liu, B.~Khoussainov, and M.~Xiao.
\newblock Multi-unit auction over a social network.
\newblock In \emph{Proceedings of the 2th European Conference on Artificial
  Intelligence}, pages 676--683. 2023.

\bibitem[Guo and Hao(2021)]{guo2021emerging}
Y.~Guo and D.~Hao.
\newblock Emerging methods of auction design in social networks.
\newblock In \emph{Proceedings of the Thirtieth International Joint Conference
  on Artificial Intelligence}, pages 4434--4441, 2021.
\newblock \doi{10.24963/ijcai.2021/605}.
\newblock URL \url{https://doi.org/10.24963/ijcai.2021/605}.

\bibitem[Kawasaki et~al.(2020)Kawasaki, Barrot, Takanashi, Todo, and
  Yokoo]{kawasaki2020strategy}
T.~Kawasaki, N.~Barrot, S.~Takanashi, T.~Todo, and M.~Yokoo.
\newblock Strategy-proof and non-wasteful multi-unit auction via social
  network.
\newblock In \emph{Proceedings of the AAAI Conference on Artificial
  Intelligence}, volume~34, pages 2062--2069, 2020.

\bibitem[Krysta and V{\"o}cking(2012)]{krysta2012online}
P.~Krysta and B.~V{\"o}cking.
\newblock Online mechanism design (randomized rounding on the fly).
\newblock In \emph{International Colloquium on Automata, Languages, and
  Programming}, pages 636--647. Springer, 2012.

\bibitem[Kunegis(2013)]{Kunegis2013konect}
J.~Kunegis.
\newblock Konect: The koblenz network collection.
\newblock In \emph{Proceedings of the 22nd International Conference on World
  Wide Web}, WWW '13 Companion, page 1343–1350, New York, NY, USA, 2013.
  Association for Computing Machinery.
\newblock ISBN 9781450320382.
\newblock \doi{10.1145/2487788.2488173}.
\newblock URL \url{https://doi.org/10.1145/2487788.2488173}.

\bibitem[Lehmann et~al.(2002)Lehmann, O{\'c}allaghan, and
  Shoham]{lehmann2002truth}
D.~Lehmann, L.~I. O{\'c}allaghan, and Y.~Shoham.
\newblock Truth revelation in approximately efficient combinatorial auctions.
\newblock \emph{Journal of the ACM (JACM)}, 49\penalty0 (5):\penalty0 577--602,
  2002.

\bibitem[Li et~al.(2017)Li, Hao, Zhao, and Zhou]{li2017mechanism}
B.~Li, D.~Hao, D.~Zhao, and T.~Zhou.
\newblock Mechanism design in social networks.
\newblock In \emph{Proceedings of the AAAI Conference on Artificial
  Intelligence}, volume~31, 2017.

\bibitem[Li et~al.(2018)Li, Hao, Zhao, and Zhou]{li2018customer}
B.~Li, D.~Hao, D.~Zhao, and T.~Zhou.
\newblock Customer sharing in economic networks with costs.
\newblock In \emph{Proceedings of the 27th International Joint Conference on
  Artificial Intelligence}, pages 368--374, 2018.

\bibitem[Li et~al.(2019)Li, Hao, Zhao, and Yokoo]{li2019diffusion}
B.~Li, D.~Hao, D.~Zhao, and M.~Yokoo.
\newblock Diffusion and auction on graphs.
\newblock In \emph{Proceedings of the 28th International Joint Conference on
  Artificial Intelligence}, pages 435--441, 2019.

\bibitem[Li et~al.(2020)Li, Hao, Liu, Gao, and Zhou]{li2020information}
B.~Li, D.~Hao, F.~Liu, H.~Gao, and T.~Zhou.
\newblock Information diffusion and revenue optimization in distribution
  market.
\newblock \emph{IEEE Access}, 8:\penalty0 40850--40860, 2020.

\bibitem[Li et~al.(2024)Li, Hao, and Zhao]{li2024diffusion}
B.~Li, D.~Hao, and D.~Zhao.
\newblock Diffusion auction design with transaction costs.
\newblock \emph{Autonomous Agents and Multi-Agent Systems}, 38\penalty0
  (1):\penalty0 2, 2024.

\bibitem[Liu et~al.(2023)Liu, Lian, and Zhao]{liu2023diffusion}
H.~Liu, X.~Lian, and D.~Zhao.
\newblock Diffusion multi-unit auctions with diminishing marginal utility
  buyers.
\newblock In \emph{Proceedings of the 2023 International Conference on
  Autonomous Agents and Multiagent Systems}, pages 2715--2717, 2023.

\bibitem[McAuley and Leskovec(2012)]{leskovec2012learning}
J.~J. McAuley and J.~Leskovec.
\newblock Learning to discover social circles in ego networks.
\newblock pages 548--556, 2012.
\newblock URL
  \url{https://proceedings.neurips.cc/paper/2012/hash/7a614fd06c325499f1680b9896beedeb-Abstract.html}.

\bibitem[Mu'Alem and Nisan(2008)]{mu2008truthful}
A.~Mu'Alem and N.~Nisan.
\newblock Truthful approximation mechanisms for restricted combinatorial
  auctions.
\newblock \emph{Games and Economic Behavior}, 64\penalty0 (2):\penalty0
  612--631, 2008.

\bibitem[Silva et~al.(2020)Silva, Silva, and Rosa]{silva2020success}
L.~Silva, N.~F. Silva, and T.~Rosa.
\newblock Success prediction of crowdfunding campaigns: a two-phase modeling.
\newblock \emph{International Journal of Web Information Systems}, 16\penalty0
  (4):\penalty0 387--412, 2020.

\bibitem[Tadelis(2016)]{tadelis2016reputation}
S.~Tadelis.
\newblock Reputation and feedback systems in online platform markets.
\newblock \emph{Annual Review of Economics}, 8:\penalty0 321--340, 2016.

\bibitem[Takanashi et~al.(2019)Takanashi, Kawasaki, Todo, and
  Yokoo]{takanashi2019efficiency}
S.~Takanashi, T.~Kawasaki, T.~Todo, and M.~Yokoo.
\newblock Efficiency in truthful auctions via a social network.
\newblock \emph{arXiv preprint arXiv:1904.12422}, 2019.

\bibitem[Xiao et~al.(2022)Xiao, Song, and Khoussainov]{xiao2022multi}
M.~Xiao, Y.~Song, and B.~Khoussainov.
\newblock Multi-unit auction in social networks with budgets.
\newblock In \emph{Proceedings of the AAAI Conference on Artificial
  Intelligence}, volume~36, pages 5228--5235, 2022.

\bibitem[Yin et~al.(2017)Yin, Benson, Leskovec, and Gleich]{Yin2017Local}
H.~Yin, A.~R. Benson, J.~Leskovec, and D.~F. Gleich.
\newblock Local higher-order graph clustering.
\newblock In \emph{Proceedings of the 23rd {ACM} {SIGKDD} International
  Conference on Knowledge Discovery and Data Mining, Halifax, NS, Canada,
  August 13 - 17, 2017}, pages 555--564. {ACM}, 2017.
\newblock \doi{10.1145/3097983.3098069}.
\newblock URL \url{https://doi.org/10.1145/3097983.3098069}.

\bibitem[Zhang et~al.(2020{\natexlab{a}})Zhang, Zhao, and
  Chen]{zhang2020redistribution}
W.~Zhang, D.~Zhao, and H.~Chen.
\newblock Redistribution mechanism on networks.
\newblock In \emph{Proceedings of the 19th International Conference on
  Autonomous Agents and MultiAgent Systems}, pages 1620--1628,
  2020{\natexlab{a}}.

\bibitem[Zhang et~al.(2020{\natexlab{b}})Zhang, Zhao, and
  Zhang]{zhang2020incentivize}
W.~Zhang, D.~Zhao, and Y.~Zhang.
\newblock Incentivize diffusion with fair rewards.
\newblock In \emph{Proceedings of the 24th European Conference on Artificial
  Intelligence}, pages 251--258. 2020{\natexlab{b}}.

\bibitem[Zhao et~al.(2018)Zhao, Li, Xu, Hao, and Jennings]{zhao2018selling}
D.~Zhao, B.~Li, J.~Xu, D.~Hao, and N.~R. Jennings.
\newblock Selling multiple items via social networks.
\newblock In \emph{Proceedings of the 17th International Conference on
  Autonomous Agents and MultiAgent Systems}, pages 68--76, 2018.

\end{thebibliography}

\clearpage


\appendix

\noindent {\Large  \bf APPENDIX}


\section{Meta-MSN-m}
\label{app:meta-MSN-m}


\noindent {\bf Theorem~\ref{thm:MetaMSN-m_sw}.}
{\it Given a classical mechanism $\tilde{\M}$ that maximises social welfare, the social welfare of $\M$ obtained from $\tilde{\M}$ by applying MetaMSN-m is no less than the social welfare of $\tilde{\M}$ over the seller's neighbourhood.}

\begin{proof}

Let $A^{(1)}$ be the set of explored buyers at the end of the first iteration, and we have $r_s \subseteq A^{(1)}$. 
As the classical mechanism $\tilde{\M}$ maximises the social welfare for all possible valuation sets, the social welfare of $\tilde{\M}$ over the explored buyers should be no less than that of $\tilde{\M}$ over seller's neighbours, i.e., $\SW_{\tilde{\M}}(\theta_{A^{(1)}}) \geq \SW_{\tilde{\M}}(\theta_{r_s})$. Also, by Line~\ref{ln:pi&p_m} of Alg.~\ref{alg:MetaMSN-m}, we have $\SW_{\M}(\theta'_{A^{(1)}}) =  \SW_{\tilde{\M}}(\theta'_{A^{(1)}})$, which is $\geq \SW_{\tilde{\M}}(\theta_{r_s})$. 

If the items are allocated in one iteration, $\SW_{\M}(\theta'_{A^{(1)}})$ is the social welfare of $\M$ over the buyer set $B$, and we have proved what we want. Otherwise, the mechanism continues to expand the set of explored buyers. With the same arguments, we can show that in every iteration $t$, the social welfare $\SW_{\M}(\theta'_{A^{(t)}}) \geq \SW_{\tilde{\M}}(\theta_{r_s})$, which proves the theorem.  
\end{proof}

\section{Experiment}
\label{app:exp}

Here, we give the complementary content of our experiment.

\noindent{\bf Dataset.} We first list the key statistics of the datasets in Table~\ref{tab:dataset}.

\begin{table}[h]
    \centering
    \footnotesize
    \caption{Dataset statistics. $C$ denotes clustering coefficient}
    \vspace{0.5cm}
    \begin{tabular}{|c|c|c|c|c|}
    \hline
    dataset & $|V|$ & $|E|$ & $C$ & diameter \\
    \hline
    Facebook social network & 4039 & 88234 & 0.6055 & 8 \\
    \hline
    Hamsterster friendships & 1858 & 12534 & 0.0904 & 14 \\
    \hline
    email-Eu-core network & 1005 & 25571 & 0.3994 & 7  \\
    \hline
    \end{tabular}
    \label{tab:dataset}
\end{table}

\paragraph{Results.} Table~\ref{tab:result_subadditive} presents the $\SW$ and $\RV$ for Case (iii) combinatorial auction with sub-additive buyers. The results for Case (iii) are also similar to those for Case (i), i.e., ALL performs better than MetaMSN-m, followed by FIRST.  For Q1, MetaMSN-m returns inferior social welfare and social welfare than ALL under varying $m$, losing by at most $17\%$ ($31\%$) of $\SW$ ($\RV$) across the three datasets. For Q2, MetaMS-m grows by at least $6\%$ ($5\%$) of $\SW$ ($\RV$) across the three datasets.

\begin{table}[h]
    \centering
    \caption{Social welfare and revenue for Case (iii) combinatorial auction with sub-additive buyers}
    \vspace{0.5cm}
    \label{tab:result_subadditive}
    \resizebox{\columnwidth}{!}{
    \begin{tabular}{|c|c|c|c|c|c|c|c|}
        \hline
        \multirow{2}{*}{Dataset} & \multirow{2}{*}{$m$} & \multicolumn{3}{|c|}{Social welfare} & \multicolumn{3}{|c|}{Revenue} \\ \cline{3-8}
         &  & MetaMSN-m &ALL &FIRST & MetaMSN-m &ALL &FIRST\\\hline
        \multirow{3}{*}{Facebook} 
        & 3 &34156 & 36156 & 31453 & 33058 & 34125 & 25825 \\ \cline{2-8}
        & 4 &34715 & 35438 & 31960 & 21026 & 30352 & 11503 \\ \cline{2-8}
        & 5 &42154 & 44160 & 38331 & 29656 & 36730 & 8434  \\ \hline\hline
        \multirow{3}{*}{Hamsterter}  
        & 3 &31417 & 33678 & 29010 & 26634 & 32841 & 14318 \\ \cline{2-8}
        & 4 &35092 & 38173 & 33000 & 26265 & 37922 & 25016 \\ \cline{2-8}
        & 5 &33513 & 40359 & 29780 & 22867 & 28656 & 15267 \\ \hline\hline
         \multirow{3}{*}{email} 
        & 3 &33312 & 33709 & 28922 & 27676 & 32153 & 25037 \\ \cline{2-8}
        & 4 &37162 & 38116 & 35235 & 36033 & 37353 & 33074 \\ \cline{2-8}
        & 5 &33831 & 37975 & 28861 & 21900 & 23534 & 14148 \\ \hline
    \end{tabular} }
\end{table}

Figure~\ref{fig:singleminded-gause} shows the experiment results on LOS-SN under normal distribution $\mathcal{N}(100000, 4000)$. The results show similar patterns to those observed under the uniform distribution. 
For Q1, ALL obtains higher social welfare and revenue than MetaMSN and MetaMSN-m, under varying $m$. Nevertheless, the lines of MetaMSN are very close to those of ALL, losing by at most $2\%$ ($3\%$),  $2\%$ ($4\%$),  and $3\%$ ($4\%$) of $\SW$ ($\RV$) for Facebook, Hamsterster, and Email networks, resp., which shows that MetaMSN achieves near-optimal social welfare and revenue. 
For Q2, both MetaMSN and MetaMSN-m obtain better social welfare and revenue than FIRST. MetaMSN grows by at least $56\%$ ($73\%$),  $97\%$ ($181\%$),  and $52\%$ ($58\%$) of $\SW$ ($\RV$) while MetaMS-m grows by at least $36\%$ ($54\%$),  $82\%$ ($149\%$),  and $47\%$ ($49\%$) of $\SW$ ($\RV$) for the three datasets, resp. The growth verifies that effectiveness of our meta-mechanisms in information prorogation. For Q3, MetaMSN obtains better $\SW$ and $\RV$ than MetaMSN-m. Nevertheless, the gap is very small.

\begin{figure}[t]
    \centering
    \includegraphics[width = \columnwidth]{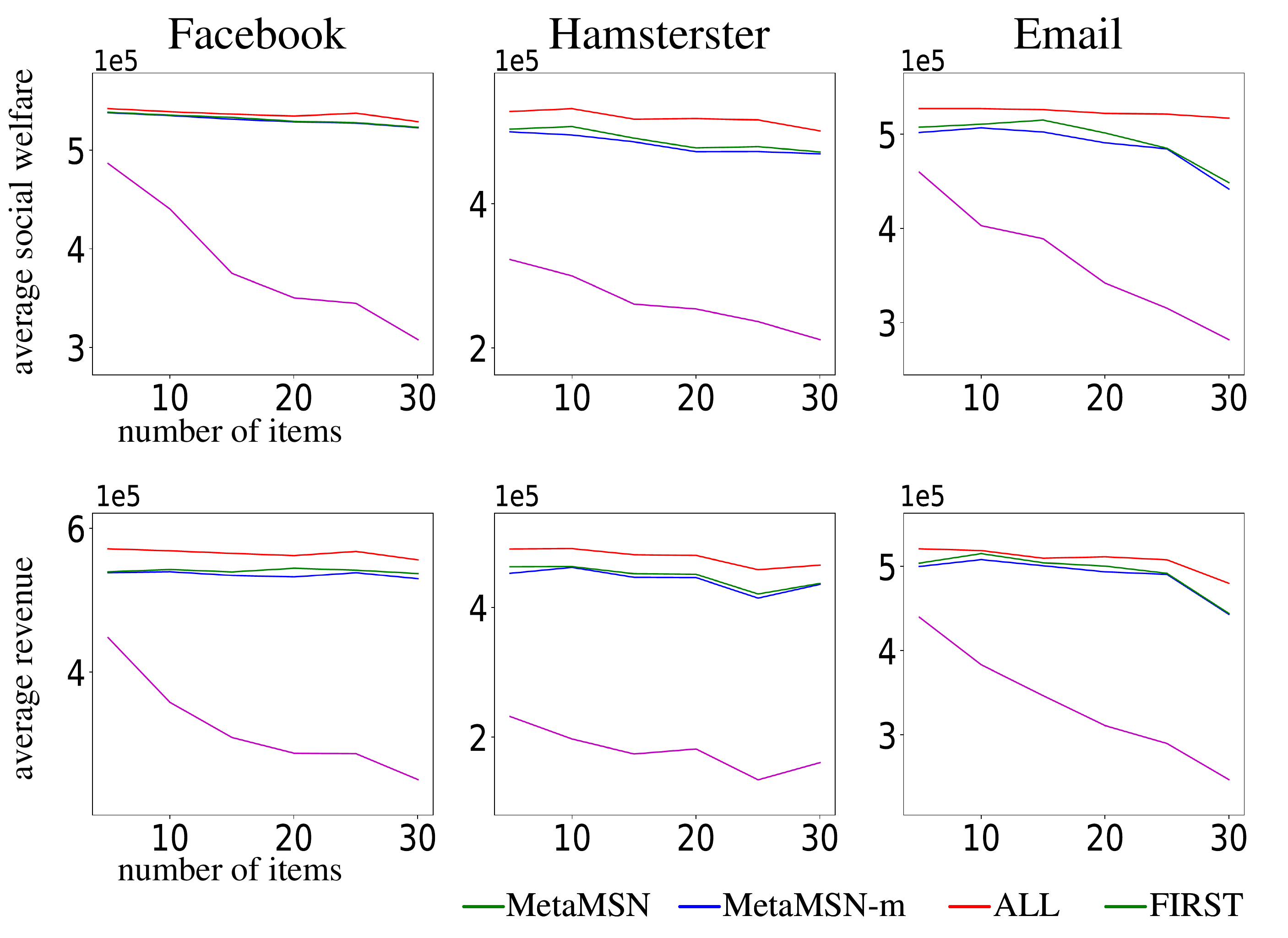}
    \caption{Social welfare and revenue of four mechanisms in three datasets for (i) combinatorial auction with single-minded buyers under normally distributed valuations}
    \label{fig:singleminded-gause}
    \vspace{0.6cm}
\end{figure}

\section{Limitations and Future Work of the Experiment}\label{app:limit}
While the study demonstrates considerable strengths in adapting auction mechanisms for social networks, we acknowledge certain limitations that offer avenues for future research. 
\begin{itemize}[leftmargin=*]
\item  Our experiments focus on sub-modular and sub-additive valuation buyers. While these two classes of buyers are important and widely-studied, our MetaMSN-m mechanism is indeed applicable to a much more general class of auctions, i.e., combinatorial auction with general mononotone valuation buyers. Furthermore, both the class of sub-additive functions and the class of sub-modular functions contain multiple specific functions and we test only those functions generated by the coverage function and square root function, respectively in each class. One of the future directions is to conduct more thorough experiments for other classes of combinatorial auction scenarios. 
 
\item  We apply our MetaMSN-m to the DNS mechanism by setting the parameter $\epsilon=0.01$. This is because the DNS mechanism empirically obtains a reasonable social welfare when $\epsilon=0.01$. When $\epsilon$ becomes too big, e.g., when $\epsilon=0.1$, the way DNS mechanism (in the classical model) works makes the social welfare drastically smaller than the case when $\epsilon=0.01$. When $\epsilon$ becomes too small, say when $\epsilon=0.005$, then too few buyers will participate in the fixed-price auction in the DNS mechanism. We must point out that the affect of this parameter on social welfare is not the focus of this study. Nevertheless, a future work is to could conduct more validation with different values of $\epsilon$.

\item  The experiments might seem limited in terms of the size and scale, possibly affecting the robustness and depth of the findings. Indeed, our experiments are conducted on a social network with at most $4000+$ nodes, and the number $m$ of items on sale is only $30$ for single-minded buyers and 5 for sub-modular and sub-additive valuation cases. This relatively small scale is due to high computational cost of the DNS mechanism which requires finding the optimal fractional solution of a group of buyers. Nevertheless, this does not hinder the role of our meta-mechanism which could transforms an arbitrary classical mechanism to the setting where bidders are connected in a social network.

\item The article acknowledges that there might be practical challenges in implementing the proposed mechanisms in real-world social networks. These challenges could stem from the complexities of real-world social dynamics, the need for robust and scalable technology solutions, and the necessity of aligning with legal and ethical standards in different jurisdictions. These are beyond the scope of this paper.

\end{itemize} 

Notably, the study was constrained by page limitations, precluding a complete and thorough empirical analysis. Despite these constraints, the research makes a substantial contribution by demonstrating the practicality of applying traditional auction principles in network settings, preserving key properties like incentive compatibility and individual rationality, and showing near-optimal social welfare and revenue performance. These findings lay the groundwork for further exploration and refinement of auction mechanisms in the context of social networks.


\end{document}